\documentclass[te,nameyear,final]{econsocart}
\RequirePackage[colorlinks,citecolor=blue,linkcolor=blue,urlcolor=blue,pagebackref]{hyperref}

\usepackage{amsmath,amssymb,mathtools}
\usepackage{booktabs,array,multirow,caption,subcaption}

\startlocaldefs

\theoremstyle{plain}
\newtheorem{proposition}{Proposition}
\newtheorem{lemma}[proposition]{Lemma}

\newtheorem{corollary}[proposition]{Corollary}

\theoremstyle{definition}
\newtheorem{definition}{Definition}
\newtheorem{remark}{Remark}

\newcommand{\E}{\mathbb{E}}

\newcommand{\indic}{\mathbf{1}}
\newcommand{\CE}{\mathrm{CE}}
\newcommand{\supp}{\mathrm{supp}}
\newcommand{\KL}{\mathrm{KL}}
\newcommand{\Hcal}{\mathcal{H}}
\newcommand{\Dcal}{\mathcal{D}}
\newcommand{\Fcal}{\mathcal{F}}
\newcommand{\Ical}{\mathcal{I}}
\usepackage{bbm}

\newcounter{example}[section]

\newcounter{subexample}[example]

\newcommand{\exname}[2]{%
  \refstepcounter{example}%
  \subsection*{Example \theexample. #2}%
  \hypertarget{ex:#1}{}%
}

\usepackage{caption}
\usepackage{subcaption}
\usepackage{parskip}
\usepackage{lipsum}
\usepackage{etoolbox}
\usepackage{tikz}
\usetikzlibrary{calc,patterns}
\usetikzlibrary{arrows,shapes,positioning}
\usepackage{tabularx}
\usetikzlibrary{arrows,shapes,positioning}

\usepackage{pgfplots}
\pgfplotsset{compat=1.18}

\usetikzlibrary{calc,patterns}
\usetikzlibrary{arrows,shapes,positioning}
\usetikzlibrary{arrows.meta, decorations.pathmorphing, positioning}

\usepackage{graphicx}
\usepackage{lipsum}
\usepackage{xcolor}

\usepackage{multirow,array}

\usepackage{array}
\usepackage{hhline}
\usepackage{float}

\endlocaldefs

\begin{document}

\begin{frontmatter}

\title{Coordination Mechanisms with Partially Specified Probabilities}
\runauthor{Francesco Giordano}
\runtitle{Coordination Mechanisms with Partially Specified Probabilities}

\begin{aug}
\author[add1]{\fnms{Francesco}~\snm{Giordano}\ead[label=e1]{francesco.giordano@hec.edu}}

\address[add1]{%
\orgdiv{Economics and Decision Sciences},
\orgname{HEC Paris}}
\end{aug}

\begin{funding}
I am grateful to Frederic Koessler, Tristan Tomala, Marie Laclau,
Nicolas Vieille, Itzhak Gilboa, Fran\c{c}oise Forges, Marco Scarsini, Julian
Chitiva, and Francesco Conti for helpful discussions and comments. This work
was significantly improved through discussions with Rossella Argenzano,
Mariagiovanna Baccara, Pierpaolo Battigalli, Olivier Compte, Michael
Greinecker, Philip Jehiel, Elliot Lipnowski, Alessandro Lizzeri,
J\'er\^ome Renault, Ludovic Renou, Karl Schlag, Vasiliki Skreta, Ran
Spiegler, and Guillaume Vigeral. I thank participants of the (Junior)
Game Theory Seminar at the Institut Henri Poincar\'e, the HEC PhD
brownbag, the 7th World Congress of the Game Theory Society (Beijing),
the Premi\`ere rencontre nationale du RT Optimisation (Lyon), and the
2025 European Winter Meeting of the Econometric Society (Nicosia) for
their comments. All errors are my own. This work benefited from a State
grant managed by the Agence Nationale de la Recherche under the
Investissements d'Avenir programme, reference ANR-18-EURE-0005 / EUR
DATA EFM.
\end{funding}

\begin{abstract}
We study which outcomes are implementable by disclosing coarse statistics of a data-generating process rather than its full distribution. Players observe data whose joint distribution is only partially known: they know the expectations of finitely many random variables and form beliefs by maximum-entropy inference. We obtain two characterizations. When message spaces are unrestricted, implementable outcomes coincide with jointly coherent outcomes, expanding the set of correlated equilibria. With canonical mechanisms, implementability reduces to a single cross-entropy condition: the target outcome must lie on the cross-entropy level set of some correlated equilibrium that passes through that equilibrium itself. Examples and several classes of games illustrate the reach of the framework.
\end{abstract}

\begin{keyword}
\kwd{correlated equilibrium}
\kwd{maximum entropy}
\kwd{mechanism design}
\kwd{information design}
\kwd{correlation neglect}
\kwd{partially specified probabilities}
\end{keyword}

\begin{keyword}[class=JEL]
\kwd{C72}
\kwd{D82}
\kwd{D83}
\end{keyword}

\end{frontmatter}

\section{Introduction}

Often, decision makers act based on observed data, while they only have
partial knowledge of how these data are produced: this paper investigates
some implications of coarse information in games. In many
strategic situations, players have a good understanding of the payoff
structure of the game, but face uncertainty about how correlated their
information is. For instance, in financial markets, traders observe analyst
forecasts that are partly based on shared data sources. Each trader knows
that others receive similar recommendations, but not how strongly these are
correlated. Under correlation neglect, they might implicitly act as if others' signals are ``as
independent as possible'' given what they know --- a maximum-entropy belief.
A similar logic applies in social media coordination: users decide whether
to join a protest or adopt a behaviour after seeing messages or posts
generated by an opaque algorithmic feed. Although everyone knows the
incentives to coordinate, each person only has coarse knowledge of what
others have seen and therefore adopts some heuristic to deal with the
unknown correlation. Finally, in corporate committees, members often rely on reports from analysts who use overlapping data. Each member of the
committee knows their colleagues' advice comes from correlated sources but
not how much overlap exists, leading them to treat others' signals as
roughly independent. In all these settings, the game is common knowledge,
but the structure of the correlation device linking players' private signals
is uncertain, making entropy-based beliefs a
natural behavioral foundation.

These examples emphasize some broad implications of partial information
disclosure and raise the central question of this paper: what economic
outcomes can arise when data is available, but disclosure rules limit the
extent to which the data-generating process is revealed? This paper models
an idealized environment that captures the information misspecification as a
design problem, where a hypothetical information provider generates data
according to a true data-generating process (constituted by a set of
messages and a probability distribution over them) and a feedback
structure: a set of functions whose expectation is known according to the
data-generating process, that represents the partial information players
have about how data is produced. Before taking any action, each player
receives some data, which may be correlated with the data received by other
players. However, the joint distribution of such data is only partially
revealed by the feedback structure, leaving ambiguity about the true
underlying process. Messages are not payoff-relevant per se and, faced with
a set of plausible data-generating processes consistent with the moment
conditions, players use a heuristic maximum-entropy criterion to identify a
belief that aligns with their limited knowledge of the true process.

The interaction between the available data and the induced beliefs gives
rise to endogenous patterns of correlated behavior. The main body of the
paper develops a formal framework for analyzing such partially specified
data-generating processes and characterizes the set of outcomes
implementable under both canonical and arbitrary partially specified data-generating processes. In
both cases, the notion of implementation allows for more than correlated
equilibria. If the message set is arbitrary, the set of outcomes that can
be implemented coincides with the set of outcomes that are jointly
coherent, namely randomizations over jointly coherent strategy profiles, i.e., profiles that lie in the support of some correlated equilibria, as
introduced by \citet{nau1990}. Once we restrict attention to the case of
canonical implementation---namely, the message set coincides with action
recommendations and players' strategies are obedient---the set of implementable outcomes depends on a condition relating the cross-entropy of the outcome to the set of correlated equilibria, capturing the extent to which the data-generating process can be distorted into a belief consistent with a correlated equilibrium.

When agents complete partially specified information using Shannon entropy,
the induced beliefs generate systematic correlation across actions, so that
coarse information disclosure itself acts as a coordination device.
Importantly, the set of outcomes implementable through such disclosure
strictly expands the set of correlated equilibrium outcomes.

\subsection{Literature}\label{sec:lit}

Correlated behavior in games has been introduced and studied since the
seminal papers of \citet{aumann1974} and \citet{aumann1987}, with a recent
comprehensive review provided by \citet{forges2024}. In particular, a
mechanism design perspective is introduced in the foundational work of
\citet{myerson1982}, who develops a generalized principal-agent model in
which a principal can coordinate players' actions through coordination
devices; when players have a single type, the principal's decision problem
reduces to choosing a correlated equilibrium. The current paper
integrates the classical framework with more recent advances concerning
partial specification of underlying data-generating processes, by allowing
for partially specified probabilities, incomplete data access, and model
misspecification.

\citet{lehrer2012} proposes a theory of decision making under partially
specified probabilities and introduces an equilibrium concept in which
players best respond to the worst-case strategies consistent with a
partially specified distribution over others' actions. In that model, the
mixed strategies of other players are only partially specified, and players
evaluate their actions reflecting ambiguity aversion. In contrast, our
model considers a similar framework of partially specified information but
assumes a heuristic belief-formation process that fills in the missing
information. Players then respond optimally by maximizing their expected
utility based on this partial information. This distinction is motivated by
the fact that, in our setup, messages are not inherently payoff-relevant;
rather, an equilibrium condition is defined given a belief over these
messages.

Our notion of implementation is also connected to prominent solution
concepts in which players' actions align with what they know, while relying
on subjective beliefs or adversarial attitudes whenever available data do
not support a particular belief. An example is the concept of
self-confirming equilibrium, as examined in \citet{battigalli2015} and
\citet{battigalli2016}. Although self-confirming equilibrium accommodates
correlated beliefs, the present paper develops a more structured framework
to analyse settings in which players' information---and thus their
beliefs---is explicitly correlated. Relatedly, \citet{koessler2025}
investigate the design of feedbacks under information misspecification,
using the notion of maxmin self-confirming equilibrium (MSCE). The authors
characterize MSCE in the presence of coarse feedback on equilibrium
strategies; unlike \citet{koessler2025}, our equilibrium notion is derived
from correlated equilibrium conditions, allowing correlation through an
external device and resolving ambiguity through a heuristic.

The closest methodological precedent is \citet{spiegler2021}, who studies players with limited archival access to the steady-state distribution of endogenous and exogenous variables, and who form beliefs by maximizing Shannon entropy subject to the moments they observe. The present paper shares this belief-formation primitive --- maximum-entropy completion under partially specified probabilities --- but differs in both the question and the strategic environment. \citet{spiegler2021} examines how restricted data access shapes equilibrium behavior in a self-confirming steady state, taking the data-access structure as exogenous. We instead adopt a mechanism-design perspective: the feedback structure is the designer's choice variable, and we ask which outcomes can arise as the designer varies what statistics of the data-generating process are disclosed. The resulting characterization, in the canonical case, takes the form of a cross-entropy condition that relates the target outcome to the set of correlated equilibria --- a connection that has no counterpart in the steady-state setting. 

The use of maximum entropy
\citep{shannon1948} as a rule for prior selection follows the foundational
work of \citet{jaynes1968} and \citet{jaynes1982}. Some fundamental results are reviewed in \citet{cover2006}. The maximum-entropy method has been proposed as a criterion for selecting priors in Bayesian models—see, for example, \citet{jaynes1968} and \citet{kass1996}. In the information theory literature, several axiomatizations of this criterion have been developed, including those of \citet{skilling1988}, \citet{csiszar1991}, and \citet{shore1980}.

A growing body of work on information misspecification is related to our
analysis, particularly in applications to persuasion. For instance,
\citet{eliaz2021} models strategic communication where a sender provides a
receiver with some interpretations of messages, while \citet{eliaz2021b}
study bounded rationality in persuasion through the pooling of messages,
that in turn affects posterior beliefs and persuasion probabilities.
\citet{epstein2024} examine decision-making under signals that have several
possible interpretations; also, several studies highlight the effects of
correlation neglect in belief formation \citep{enke2019,epstein2019} and in
persuasion contexts \citep{levy2022}. Studies of mechanism design involving
ambiguity have been proposed by \citet{bose2014} and \citet{dutting2024}.

\subsection{Outline of the paper}

The remainder of the paper is organized as follows. Section~\ref{sec:env}
introduces the primitives of the model and discusses the information
environment, partially specified data-generating processes, and players'
belief formation. Section~\ref{sec:impl} presents the relevant notion of
implementation and contains the main results. Section~\ref{sec:examples}
presents a series of examples, including coordination games, Chicken games, and some illustrative instances on the construction of arbitrary and direct information structures. 
Section~\ref{sec:concl} concludes and discusses directions for future
research. All the proofs are relegated to the
Appendix.

\section{Information Environment}\label{sec:env}

We consider a simultaneous-move finite game $\Gamma = (N,A,u)$ extended
with a generic finite set of messages $M$. The set of players, denoted by
$N$, is finite and has generic element $i\in N$. The set of action profiles
is a finite set $A = \times_i A_i$ with generic element
$(a_1,\ldots,a_i,\ldots,a_n)\in\times_i A_i$. The symbol $A_i$ denotes the
set of actions of player $i\in N$. The set of utility functions is
$u=\{u_i\}_{i\in N}$, where $u_i:A\to\mathbb{R}$ is the utility function
of player $i\in N$. A correlated strategy is denoted by $\mu\in\Delta A$.
We consider a set of message profiles $M=\times_i M_i$, where $M_i$
denotes the set of messages that can be received by player $i\in N$. As
standard, the set of messages obtained by excluding $M_i$ is denoted by
$M_{-i}$.

Before taking an action, each player $i$ observes some data, represented by
an element of the message set $M_i$. Data itself is not directly
payoff-relevant. Although the data received by players may be correlated,
the joint distribution is only partially determined by a disclosure policy.
Confronted with a set of plausible data-generating processes consistent with
the available partial information, players employ the principle of maximum
entropy as their belief-formation rule.

\subsection{Partially Specified Data-Generating Processes}

A partially specified data-generating process consists of a data-generating
process and a feedback structure. A data-generating process is a couple
composed of a finite set of messages $M$ and a probability distribution
$\eta\in\Delta M$, while a feedback structure $\Fcal\subseteq
\{f:M\to\mathbb{R}\}$ is a collection of real-valued random variables on
$M$. Players do not know directly the data-generating process, but are
informed about the expectations of such random variables. A feedback
structure represents the disclosure rule on the data-generating process.

\begin{definition}[Partially Specified Data-Generating Process]
A partially specified data-generating process is a triple
$\Dcal=(M,\eta,\Fcal)$.
\end{definition}

A natural simplification in many economic and mechanism-design settings is
to restrict attention to canonical mechanisms. In our
context, a mechanism is canonical when the ``data'' received by players
consists solely of recommended actions. The data-generating process is
therefore canonical whenever $M=A$. In many cases, focusing on
canonical mechanisms is without loss of generality with respect to the set
of outcomes that can arise in equilibrium. This restriction also reduces the
otherwise arbitrary complexity of the information environment. 

Fixing a data-generating process $(M,\eta)$ and a feedback structure
$\Fcal$, players know $\E_{m\sim\eta}[f(m)]$ for all $f\in\Fcal$. Hence,
the set of plausible data-generating processes consistent with
$\Dcal=(M,\eta,\Fcal)$ is
\[
  \Delta_\Dcal
  =\Bigl\{q\in\Delta M:\sum_m q_m f(m)=\sum_m\eta_m f(m),\;\forall
  f\in\Fcal\Bigr\}.
\]

Two partially specified data-generating processes $\Dcal=(M,\eta,\Fcal)$
and $\Dcal'=(M,\eta',\Fcal')$ are \emph{informationally equivalent} if
they share the same message sets and induce the same moment
constraints. Although no restrictions are imposed on the set $\Fcal$, the feedback structure can, without loss of generality, be assumed to be finite. In particular, Lemma~\ref{lem:finiteF} in the appendix shows that for any partially specified data-generating process, there exists an informationally equivalent finite data-generating process.

Notice that a partially specified data-generating process is fully specified
if the set $\Fcal$ is rich enough. Namely, if the moment constraints allow
to perfectly identify the data-generating process and thus
$\Delta_\Dcal=\{\eta\}$. As an example, a fully revealing feedback
structure is $\Fcal=\{\indic_{\{m\}},\forall m\in M\}$, the set of all the
indicator functions. The feedback structure is assumed to be public: players
have access to the same partial information according to a common feedback
structure.

Despite being an unconventional way of modelling and reproducing
information, such formulation is not new and a general discussion is
presented in \citet{lehrer2012}. The notion of partially specified
data-generating process is indeed derived by the notion of partially
specified probabilities used in the decision model of \citet{lehrer2012}.
The change of wording is due to the fact that, in the current setup, the
feedback structure does not convey information on other players' strategies,
but on generic data that are not directly payoff-relevant yet correlate
players' behaviour.

\subsection{Belief Formation}

Through a partially specified data-generating process, players have partial
knowledge about how data is produced. Given such partial knowledge
available, each player employs a heuristic rule to form a belief about the
data-generating process before message are received. We consider a heuristic process of pre-play belief
formation that follows the maximization of the Shannon entropy under the
moment constraints implied by the partially specified data-generating
process. The maximum-entropy belief-formation rule\footnote{See, for
example, \citet{kass1996} for a discussion.} has several behavioral
implications, above all including correlation neglect, and a na\"{i}ve and
conservative\footnote{As claimed in \citet{jaynes1968}, the maximum-entropy
distribution ``\textit{[...] is the one which is, in a certain sense, spread
out as uniformly as possible without contradicting the given information,
i.e., it agrees with what is known, but expresses a ``maximum uncertainty''
with respect to all other matters, and thus leaves a maximum possible
freedom for our final decisions to be influenced by the subsequent sample
data.}''} attitude towards uncertainty, which may be justified by the fact
that in our setup messages are not directly payoff-relevant.

Players maximize a strictly concave function $\Hcal:\Delta M\to\mathbb{R}$
over the constraint set, where $\Hcal(q)=-\sum_m q_m\log q_m$. The belief
formation rule arises thus as the unique solution of
\begin{equation}\label{eq:B}
  \max_{q\in\Delta_{\mathcal{D}}}\;\Hcal(q).\tag{B}
\end{equation}
 An implicit behavioral
consequence of this formulation is that, since the loss is steep at the
boundary of the simplex, decision makers do not assign zero probability to
any event unless this is explicitly required by the moment constraints. This
condition can be interpreted as a ``grain of truth'' assumption in players'
beliefs on the true data-generating process. A second implicit assumption is
that beliefs accurately reflect the information available to decision
makers. Finally, beliefs remain invariant across two informationally
equivalent partially specified data-generating processes. Also, analytically,
the maximum entropy rule exhibits several well-known properties. In
particular, if only the marginal distributions of a joint distribution are known,
the maximum-entropy distribution is given by the product of the marginals,
thereby illustrating correlation neglect. Furthermore, if only the support
is known, the maximum-entropy distribution is uniform (see Theorem~2.6.4 of
\citealt{cover2006}), providing an example of the principle of insufficient
reason.

Finally, note that the current approach applies the maximum-entropy
principle to specify a prior distribution, which is then updated via Bayes'
rule, following \citet{jaynes1968}. An alternative approach would be to first incorporate new evidence --- for instance, by adding a constraint consistent with the received message --- and then compute the maximum-entropy distribution subject to this additional constraint. In the present setting, however, players possess ex ante information, and we adopt an ex ante modeling approach: they know certain moments of the joint distribution of messages. Upon observing a realization of the message, players learn only the corresponding event, not the conditional moments of the data-generating process.

\section{Implementation}\label{sec:impl}

Consider a profile of
strategies $\sigma=\{\sigma_i\}_{i\in N}$ consisting of a map
$\sigma_i:M_i\to\Delta A_i$ for any player $i\in N$. Conditioning on the
realization of a message profile $m=(m_1,\ldots,m_n)\in M$, players' joint
play is the probability distribution $\sigma(m)\in\times_i\Delta A_i$
defined as
\[
  \sigma(a_1,\ldots,a_n\mid m)=\prod_{i\in N}\sigma_i(a_i\mid m_i),
  \quad\forall (a_1,\ldots,a_n)\in A.
\]

Outcome $\mu\in\Delta A$ is derived from a data-generating process
$\eta\in\Delta M$ and a profile of strategies $\sigma$ through the
pushforward measure $\mu=\eta\circ\sigma$, defined as
\[
  \mu(a_1,\ldots,a_n)=\sum_{m\in M}\eta_m\,\sigma(a_1,\ldots,a_n\mid m),
  \quad\forall(a_1,\ldots,a_n)\in A.
\]

The following definition states the notion of implementation with
endogenous belief formation. For clarity, the symbol $\eta_{m_i}$ denotes
the marginal probability of $\eta$ on message $m_i$.

\begin{definition}[Implementation]\label{def:impl}
An outcome $\mu\in\Delta A$ is \emph{implemented} by a partially specified
data-generating process $\Dcal=(M,\eta,\Fcal)$ if there exists a profile
of strategies $\sigma=\{\sigma_i\}_{i\in N}$ such that
\begin{enumerate}
  \item $q=\arg\max_{q'\in\Delta_\Dcal}\Hcal(q')$;
  \item $\displaystyle\sum_{m_{-i}\in M_{-i}}q_{m_i,m_{-i}}\bigl[u_i(\sigma_i(m_i),\sigma_{-i}(m_{-i}))-u_i(a'_i,\sigma_{-i}(m_{-i}))\bigr]\geq0$,   $ \:\: \forall \: i\in N$, $a'_i\in A_i$,;
  \item $\mu=\eta\circ\sigma$.
\end{enumerate}
Furthermore, we say that an outcome is $\epsilon$-implemented if the
best-response condition is attained up to an $\epsilon$-approximation,
namely
\[
  \sum_{m_{-i}\in M_{-i}}q_{m_i,m_{-i}}\bigl[u_i(\sigma_i(m_i),
  \sigma_{-i}(m_{-i}))-u_i(a'_i,\sigma_{-i}(m_{-i}))\bigr]\geq-\epsilon,
\]
for any $i\in N$, $a'_i\in A_i$, and $m_i$ with $q_{m_i}>0$.
\end{definition}

The first condition requires that belief formation follows from the
maximization of Shannon entropy, subject to what is known about the
data-generating process. The second condition requires that each player
adopts an expected utility best response\footnote{In particular, a profile
$\sigma:M\to\Delta A$ that is constant over messages and constitutes a Nash
equilibrium of the game automatically satisfies the incentive compatibility
conditions.} with respect to such belief and the strategies of other
players. Notice that the implementation condition requires an
equilibrium where each player is aware about the joint strategy profile,
while possessing only partial knowledge of other players' information. The
true data-generating process $\eta\in\Delta M$ and the belief
$q\in\Delta M$ may have different support: the notion of implementation
requires that the incentive compatibility condition is satisfied given the belief. Players' best-response condition must hold whenever a message
$m_i$ is assigned positive probability by player $i$'s belief. The third
condition requires that the composition of message probabilities and
players' strategies yields the intended target outcome.

An outcome is implementable if there exists a partially specified
data-generating process that implements it. A key notion behind the
characterization of implementable outcomes is the one of correlated
equilibrium \citep{aumann1974,aumann1987}.

\begin{definition}[Correlated Equilibrium]
A probability distribution $q\in\Delta A$ is a \emph{correlated
equilibrium} of $\Gamma=(N,A,u)$ if, whenever $q_{a_i}>0$,
\[
  \sum_{a_{-i}}q_{a_i,a_{-i}}\bigl[u_i(a_i,a_{-i})-u_i(b,a_{-i})\bigr]
  \geq0\quad\forall i\in N,\;b\in A_i.
\]
\end{definition}

An outcome implementable by a fully specified data-generating process is
equivalent to that outcome being a correlated equilibrium. Hence, any
correlated equilibrium is implementable by a partially specified
data-generating process.

A strategy profile is \emph{jointly coherent} \citep{nau1990} if it lies
in the support of the set of correlated equilibria. We refer to an outcome
as jointly coherent if it is a randomization over jointly coherent strategy
profiles.

\begin{definition}
An outcome $\mu\in\Delta A$ is \emph{jointly coherent} if
$\supp\mu\subseteq\supp\CE(\Gamma)$.
\end{definition}

The following result characterizes the set of outcomes implementable in a
simultaneous-move game. First, any implementable outcome must be jointly
coherent. On the other hand, any jointly coherent outcome can be
approximately implemented, for any approximation factor, by some finite
partially specified data-generating process. Furthermore, if the game has
rational payoffs, the outcomes that are implementable coincide with jointly
coherent outcomes.

\begin{proposition}\label{prop:main}
The following statements hold:
\begin{enumerate}
  \item Any implementable outcome is jointly coherent. Furthermore, if
    $\mu\in\Delta A$ is jointly coherent, then for any $\epsilon>0$ there
    exists a partially specified data-generating process that
    $\epsilon$-implements it.
  \item Let $\Gamma$ have rational payoffs. Then an outcome is
    implementable if and only if it is jointly coherent.
\end{enumerate}
\end{proposition}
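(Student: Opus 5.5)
The plan is to prove necessity by a support argument on the maximum-entropy belief. For sufficiency, I would build a partially specified data-generating process whose maximum-entropy belief is a chosen ``target'' belief that induces a correlated equilibrium with full support on $\supp\CE(\Gamma)$. I would then calibrate the true process $\eta$ so that the single moment the players observe matches that target.

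\emph{Necessity.} Suppose $\mu$ is implemented by $\Dcal=(M,\eta,\Fcal)$ with belief $q$ and strategies $\sigma$.
\begin{enumerate}
\item Condition 2 of Definition~\ref{def:impl} says exactly that $q\circ\sigma$ is a correlated equilibrium. The usual revelation argument applies: pooling the messages $m_i$ that recommend $a_i$ preserves the obedience inequalities, since each inequality is linear in $q$.
\item The key fact is that the entropy maximizer over the polytope $\Delta_\Dcal$ has maximal support in that polytope: $\supp q'\subseteq\supp q$ for every $q'\in\Delta_\Dcal$. The reason is that the derivative of $-\sum q_m\log q_m$ in the direction of an $m$ with $q_m=0$ is $+\infty$. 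So if some $q'$ charged $m$, the mixture $(1-t)q+tq'$ would stay in $\Delta_\Dcal$ and strictly increase $\Hcal$ for small $t$.
\item Since $\eta\in\Delta_\Dcal$, we get $\supp\eta\subseteq\supp q$.
\item Hence $\supp\mu=\supp(\eta\circ\sigma)\subseteq\supp(q\circ\sigma)\subseteq\supp\CE(\Gamma)$.
\end{enumerate}

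\emph{Sufficiency.} Let $\mu$ be jointly coherent.
\begin{enumerate}
\item The set $\CE(\Gamma)$ is convex. Averaging finitely many correlated equilibria whose supports cover $S:=\supp\CE(\Gamma)$ gives $\nu\in\CE(\Gamma)$ with $\supp\nu=S\supseteq\supp\mu$.
\item Enlarge the message space with payoff-irrelevant labels: $M_i=A_i\times K_i$ with $K_i$ finite, and set $K=\times_i K_i$. Let $\sigma_i(a_i,k_i)=a_i$.
\item Take the target belief $q^*=\nu\otimes\rho$, where $\rho\in\Delta K$ has full support. Conditional on $(a_i,k_i)$, the distribution of $a_{-i}$ under $q^*$ is $\nu(\cdot\mid a_i)$. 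So obedience under $q^*$ reduces to $\nu$ being a correlated equilibrium.
\item Take the feedback structure $\Fcal=\{\log q^*\ \text{on}\ \supp q^*,\ \indic_{M\setminus\supp q^*}\}$. The Lagrangian first-order conditions show that the maximum-entropy solution lies in the exponential family $q_\lambda\propto (q^*)^\lambda$ on $\supp q^*$.
\item The map $\lambda\mapsto\E_{q_\lambda}[\log q^*]$ is strictly increasing unless $q^*$ is uniform on its support; in the uniform case the constraint is vacuous and $q^*$ is the maximizer anyway. So $q=q^*$ exactly when the observed moment equals $\E_{q^*}[\log q^*]$. This gives the calibration requirement
\[
  \E_\eta[-\log q^*]=\Hcal(q^*)=\Hcal(\nu)+\Hcal(\rho).
\]
\end{enumerate}

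\emph{Calibrating $\eta$.} Set $\eta=\mu\otimes\pi$ for some $\pi\in\Delta K$. Then $\eta\circ\sigma=\mu$, and the calibration requirement becomes
\[
  -\sum_k\pi_k\log\rho_k\;=\;\Hcal(\rho)+\Hcal(\nu)+\sum_a\mu(a)\log\nu(a).
\]
The left side ranges continuously over $[\min_k(-\log\rho_k),\max_k(-\log\rho_k)]$ as $\pi$ varies. I would pick $\rho$ with one label of mass $1/2$ and the remaining mass spread over many tiny labels. Then $\min_k(-\log\rho_k)=\log 2$, while $\Hcal(\rho)$ and $\max_k(-\log\rho_k)$ can be made as large as needed to bracket the right side. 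The intermediate value theorem then gives an exact $\pi$.

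\emph{Main obstacle.} I expect the hard part to be this last calibration, together with checking that the Gibbs-family characterization of the maximizer and the support constraint interact correctly. As written, the construction would give exact implementation without the rationality assumption. If the paper's framework imposes a restriction I have not used (for example, feedback functions with rational values, or labels whose probabilities must be representable exactly), the calibration can only be met approximately. In that case I would fall back to choosing a rational $\pi$ that makes the moment mismatch small, so that $q$ is close to $q^*$ and obedience holds up to $\epsilon$. Rational payoffs would then let me perturb $\nu$ to a correlated equilibrium with a slack that absorbs the residual mismatch, which yields part 2.
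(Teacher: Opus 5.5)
Your proof is correct. The necessity half coincides with the paper's Lemma~\ref{lem:supp}, and you justify $\supp\eta\subseteq\supp q$ more carefully, via the infinite directional derivative of $\Hcal$ toward an uncharged coordinate. The sufficiency half takes a genuinely different route.

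\textbf{The paper's sufficiency argument.} It is combinatorial. Messages are split into blocks $M^a$ of $k^N$ cells, binary arrays with constant line sums are built (Lemma~\ref{lem:array}), and the only feedback disclosed is indicators of excluded cells. The maximum-entropy belief is therefore uniform on the surviving cells. Such a belief can only reproduce a correlated equilibrium with rational weights. That is why the paper gets $\epsilon$-implementation in general (via a rational perturbation), and exact implementation only when rational payoffs supply a rational maximal-support equilibrium.

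\textbf{Your sufficiency argument.} You apply the inducibility criterion of Lemma~\ref{lem:induce} on the enlarged space $A\times K$. The independent, payoff-irrelevant labels act as a free dial on cross-entropy. Writing $H(x,y)=-\E_x[\log y]$,
\[
  H(\eta,q^*)-\Hcal(q^*)=\bigl[H(\mu,\nu)-\Hcal(\nu)\bigr]+\bigl[H(\pi,\rho)-\Hcal(\rho)\bigr].
\]
With $\rho=(\tfrac12,\tfrac1{2L},\dots,\tfrac1{2L})$, the second bracket sweeps $[-\tfrac12\log L,\tfrac12\log L]$ as $\pi$ varies. So the first bracket is cancelled exactly once $L\geq\exp\bigl(2\lvert H(\mu,\nu)-\Hcal(\nu)\rvert\bigr)$.

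The bracketing needs both gaps $\Hcal(\rho)-\min_k(-\log\rho_k)$ and $\max_k(-\log\rho_k)-\Hcal(\rho)$ to be large. It is not enough that $\Hcal(\rho)$ and $\max_k(-\log\rho_k)$ are large, as you phrased it. Your specific $\rho$ does deliver both gaps.

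\textbf{What each route buys.} Your construction gives exact implementation of every jointly coherent outcome in every finite game. So Point~1 holds with $\epsilon=0$, and the rational-payoff hypothesis in Point~2 is not needed. The paper's route keeps the feedback to ``this cell never occurs'' indicators with a uniform belief on a support, which is arguably more interpretable. Yours is more general, avoids the array lemma, and costs only the disclosure of one real-valued log-likelihood statistic.

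That statistic is legitimate in this model: feedback functions are arbitrary maps $M\to\mathbb{R}$. The paper itself uses $f=\log q+1$ in Lemma~\ref{lem:induce} and $\log_2 3$-valued feedback in Example~\ref{ex:direct}. Your fallback paragraph is therefore unnecessary. Its ``slack'' step would also fail in general: in the matching-pennies game of Example~\ref{ex:large}, the unique correlated equilibrium makes every obedience constraint bind.

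Finally, you can drop the exponential-family monotonicity detour. The identity $\Hcal(q)=\Hcal(q^*)-\KL(q\Vert q^*)$ holds for every feasible $q$, which gives $q=q^*$ directly.
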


The proof is relegated to Section~\ref{sec:proofs} and it follows several
intermediate steps. The idea is the following: first, one implication
(Lemma~\ref{lem:supp}) is a consequence of the maximization of the Shannon
entropy as a tool of belief formation: since the objective function is steep
at the boundary of the simplex and players adopt a best response to a common
belief, any implementable outcome must assign positive probability only to
strategy profiles that are jointly coherent. Second (Lemma~\ref{lem:hypercube}),
if a game admits a correlated equilibrium with rational components, then it
is possible to provide a direct construction of a feedback structure that
implements any outcome whose support is a subset of the support of such correlated equilibrium

\begin{lemma}\label{lem:hypercube}
Let $\Gamma$ admit a rational\footnote{Where rational means that all
coefficients are rational numbers.} correlated equilibrium $p\in\Delta A$.
Then there exists a finite set of messages $M$ and a partially specified
data-generating process $\Dcal=(M,\eta,\Fcal)$ that implements any outcome
$\mu\in\Delta A$ such that $\supp(\mu)\subseteq\supp(p)$.
\end{lemma}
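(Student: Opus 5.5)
The plan is to build a ``hypercube'' of messages on which the uniform distribution reproduces the rational correlated equilibrium $p$. Then use the feedback structure only to pin down the support, so that maximum entropy yields exactly this uniform belief. Separately, choose the true process $\eta$ freely on that support so that it pushes forward to $\mu$.

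\emph{Construction.} Write $p_a = k_a/K$ with $k_a,K\in\mathbb{N}$ and $\sum_a k_a = K$. Assume $n=|N|\geq 2$; the one-player case is trivial, since there a correlated equilibrium is a distribution over optimal actions. Let $M_i = A_i\times\{1,\ldots,K\}$, with generic element $m_i=(a_i,t_i)$. Write $a(m)=(a_1,\ldots,a_n)$ for the action profile read off from a message profile. Define
\[
S=\Bigl\{m=((a_i,t_i))_{i\in N}:\; a(m)\in\supp(p),\ \Bigl(\textstyle\sum_{i} t_i \bmod K\Bigr) < k_{a(m)}\Bigr\}.
\]
Take the feedback structure $\Fcal=\{\indic_{\{m\}}: m\in M\setminus S\}$, and let $\eta$ be any distribution supported on $S$. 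Then $\Delta_\Dcal=\Delta(S)$, and the solution of \eqref{eq:B} is the uniform distribution $q$ on $S$ (the support-only case discussed in the belief-formation subsection). To hit $\mu$, set $\eta(m)=\mu(a(m))/(K^{n-1}k_{a(m)})$ for $m\in S$. Let $\sigma_i(a_i,t_i)=a_i$ deterministically. Since $\supp\mu\subseteq\supp p$, this gives $\eta\circ\sigma=\mu$, provided the counting claim below holds.

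\emph{Counting step.} This is the key verification. Fix a profile $a\in\supp(p)$ and free indices $t_{-i}$. Exactly $k_a$ residues of $t_i$ modulo $K$ satisfy the sum condition. Hence the number of $m\in S$ with $a(m)=a$ is $K^{n-1}k_a$, which gives $q\circ\sigma=p$.

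\emph{Obedience.} Fix player $i$ and a message $(a_i,t_i)$ that has positive probability under $q$. For each $a_{-i}$ with $a\in\supp(p)$, fix the indices of all players except one other player $j$. Then exactly $k_a$ values of $t_j$ satisfy the sum condition, whatever $t_i$ is. So the number of completions is $K^{n-2}k_a$, which is independent of $t_i$. Therefore $q(a_{-i}\mid a_i,t_i)\propto p(a_i,a_{-i})$, i.e.\ the belief about others' recommended actions coincides with $p(\cdot\mid a_i)$. Because $p$ is a correlated equilibrium and the strategies are obedient, condition 2 of Definition~\ref{def:impl} holds for every message with $q_{m_i}>0$. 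Messages with zero belief probability impose no constraint.

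\emph{Main obstacle.} The delicate point is exactly this independence of the conditional belief from the private index $t_i$. Naive replications have two failure modes. Sharing one index across all players reveals $a_{-i}$ to player $i$. Replicating each player's index independently produces counts of order $k_a^n$ instead of $k_a$. The modular-sum design avoids both problems: any single index is uninformative, yet the total count is proportional to $p$.
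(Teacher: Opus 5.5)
Your proposal is correct and follows essentially the paper's route. Like the paper, you replicate each action $K$ times per player and use indicator feedback on the zero entries of a binary array whose every line sums to $k_a$, so that the maximum-entropy belief is uniform on the surviving messages, reproduces $p$, and gives conditionals $p(\cdot\mid a_i)$; you then place $\eta$ on surviving messages to push forward to $\mu$. Your modular-sum set $S$ writes in closed form (up to relabeling indices) the array that the paper builds by cyclic-shift induction in Lemma~\ref{lem:array}, and you spell out the completion count $K^{n-2}k_a$, which the paper's incentive-compatibility step leaves implicit.
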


The proof is contained in Section~\ref{sec:proofs}. An informal intuition
of the result is the following: we construct a feedback structure using an
auxiliary high-dimensional array whose entries are binary (Lemma~\ref{lem:array}
in the Appendix), which allows to identify a suitable message set and a
collection of random variables that replicate the incentive compatibility
conditions of Definition~\ref{def:impl}. The feedback structure is simple
in the sense that it relies solely on indicator random variables.

Consequently (Proposition~\ref{prop:3}), since the environment is finite and
the rationals are dense in the reals, Lemma~\ref{lem:hypercube} can be
applied using a rational correlated belief, having the same support as the
maximal-support correlated equilibrium, to achieve an approximate
implementation.

Finally (Proposition~\ref{prop:4}), if a game has rational payoffs, then
there is a maximal-support correlated equilibrium with rational components.
It follows that the set of implementable outcomes is a convex polytope
(Remark~\ref{rem:polytope}) whose extreme points are the degenerate
probability distributions on jointly coherent action profiles.

Under full specification, a revelation principle holds \citep{myerson1982}
and an outcome is implementable by a fully specified DGP if and only if it
is implementable by a canonical data-generating process with obedient
strategies. Indeed, the set of correlated equilibria coincides with the set
of canonical correlated equilibria. Therefore, the question of
implementation by a canonical, partially specified DGP naturally arises.
In our setting, an outcome is \emph{directly implementable} if the data-generating
process is canonical and players' strategies are obedient: namely, if
$M=A$ and $\sigma_i(a_i)=\indic_{a_i}$ for each $i\in N$.

Before presenting the characterization of the outcomes directly
implementable, an intermediate step is needed. We shall first ask the
following question: under which conditions can an information provider
induce a certain target belief $q\in\Delta M$ from a true data-generating
process $\eta\in\Delta M$, through the choice of an appropriate feedback
structure?

\begin{definition}
A belief $q\in\Delta M$ can be \emph{induced} from a data-generating
process $\eta\in\Delta M$ if there exists a set
$\Fcal\subseteq\{f:M\to\mathbb{R}\}$ such that
\[
  q=\arg\max_{q'\in\Delta_\Dcal}\sum_m -q'_m\log q'_m.
\]
\end{definition}

The following lemma provides necessary and sufficient conditions to induce a
certain belief from an arbitrary data-generating process.

\begin{lemma}\label{lem:induce}
Belief $q\in\Delta M$ can be induced from $\eta\in\Delta M$ if and only if
the following conditions are satisfied:
\begin{enumerate}
  \item $\supp(\eta)\subseteq\supp(q)$;
  \item $\E_\eta[\log q]=\E_q[\log q]$.
\end{enumerate}
\end{lemma}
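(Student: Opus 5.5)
Both directions rest on the standard exponential-family characterization of maximum-entropy distributions under linear moment constraints, applied to the feasible set $\Delta_\Dcal$. The guiding observation is that $\Delta_\Dcal$ always contains the true process $\eta$, so $\eta$ is an admissible competitor in problem~\eqref{eq:B}. The variational inequality for the maximizer, tested against $\eta$, will give necessity. For sufficiency, the cleanest choice is the single feedback function $f=\log q$ on $\supp(q)$, together with a support-pinning indicator.

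\emph{Necessity.} Suppose $q$ solves \eqref{eq:B} for some $\Fcal$. By Lemma~\ref{lem:finiteF} we may take $\Fcal$ finite.

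First I would show $\supp(\eta)\subseteq\supp(q)$. Since $\eta\in\Delta_\Dcal$, the segment $q_t=(1-t)q+t\eta$ lies in $\Delta_\Dcal$ by convexity. If some $m$ had $\eta_m>0=q_m$, then $\frac{d}{dt}\Hcal(q_t)$ at $t=0^+$ would contain a term $-(\eta_m-q_m)\log q_{t,m}\to+\infty$, while all other terms stay bounded. This is the usual steepness of entropy at the boundary, and it contradicts the optimality of $q$.

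Next, with the support inclusion in hand, $\Hcal$ is differentiable at $q$ along the direction $\eta-q$. The one-sided derivative is
\[
\frac{d}{dt}\Hcal(q_t)\Big|_{t=0^+}=-\sum_m(\eta_m-q_m)(\log q_m+1)=\E_q[\log q]-\E_\eta[\log q],
\]
and optimality gives $\E_\eta[\log q]\le\E_q[\log q]$.

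For equality I need the opposite direction $q-t(\eta-q)$, which requires $q$ to be a relative-interior point in that direction. The cleaner route is the standard theorem (Csisz\'ar, or \citealt{cover2006}) that the maximum-entropy distribution on a linear family has the form $q_m\propto\exp(\sum_f\lambda_f f(m))$ on the maximal support $S$ of $\Delta_\Dcal$. Since $\supp(\eta)\subseteq S$, we get $\log q=c+\sum_f\lambda_f f$ on $S$. Because $\eta$ and $q$ share the moments of every $f\in\Fcal$ and both vanish off $S$, it follows that $\E_\eta[\log q]=\E_q[\log q]$.

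\emph{Sufficiency.} Given conditions 1 and 2, set $S=\supp(q)$. Define $f_0=\indic_{M\setminus S}$, which has $\eta$-expectation $0$ by condition 1, and $f_1=\log q$ on $S$ extended by $0$ off $S$. Take $\Fcal=\{f_0,f_1\}$.

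Then $q\in\Delta_\Dcal$: $q$ vanishes off $S$, and condition 2 gives $\E_q[f_1]=\E_\eta[f_1]$. For any $q'\in\Delta_\Dcal$, $q'$ is supported in $S$ and satisfies $\E_{q'}[\log q]=\E_q[\log q]$. Gibbs' inequality then gives
\[
\Hcal(q')=-\E_{q'}[\log q]-\KL(q'\|q)=\Hcal(q)-\KL(q'\|q)\le\Hcal(q),
\]
with equality only if $q'=q$. So $q$ is the unique maximizer of \eqref{eq:B}.

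The main obstacle is the equality in the necessity part. If the exponential-family citation is not admissible, I would prove it directly. Let $S$ be the maximal support of $\Delta_\Dcal$, which exists by convexity. Show $\supp(q)=S$ using the steepness argument above. Then $q$ is in the relative interior of $\Delta_\Dcal$ restricted to $S$, so $q\pm t(\eta-q)$ lies in $\Delta_\Dcal$ for small $t$. Both one-sided derivatives must then be nonpositive, which forces equality.
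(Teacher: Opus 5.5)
Your proof is correct and follows the same plan as the paper. For necessity, the paper also writes the stationarity condition $\log q_m+1=\sum_k\lambda_k f^k_m$ on $\supp(q)$, multiplies by $(q_m-\eta_m)$, and sums using the moment equalities; your exponential-family form on $S$ is the same step, stated with more care about supports. For sufficiency, the paper also uses the single feedback function $\log q$ (as $f=\log q+1$) plus indicators of the zeros of $q$.

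The one real difference is how you certify optimality in the sufficiency part. The paper checks that $q'=q$, $\lambda=1$ solves the KKT system. Your identity $\Hcal(q')=\Hcal(q)-\KL(q'\Vert q)$ on $\Delta_\Dcal$ gives global optimality and uniqueness in one line, without invoking KKT sufficiency. Extending $\log q$ by $0$ off $S$ also settles a point the paper leaves implicit, since its $f$ is undefined where $q_m=0$.

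Two small corrections. First, your directional derivative gives $\E_q[\log q]-\E_\eta[\log q]\le 0$, i.e.\ $\E_\eta[\log q]\ge\E_q[\log q]$, not $\le$. This is harmless because you go on to prove equality. Second, the step you call the main obstacle is not one. Once $\supp(\eta)\subseteq\supp(q)$, the point $(1+t)q-t\eta$ is nonnegative for small $t>0$ (it vanishes exactly where $q$ does) and satisfies every moment constraint, so it lies in $\Delta_\Dcal$. Both one-sided derivatives along $\pm(\eta-q)$ are then nonpositive, which forces equality without the exponential-family theorem or the maximal support $S$.
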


The proof is relegated to Appendix~\ref{app:direct} and follows from the
optimality conditions of maximum entropy. Consider indeed the optimization
program~\eqref{eq:B}. The KKT system gives necessary and sufficient
conditions for the unique solution \citep{boyd2004}, since~\eqref{eq:B} is
the maximization of a strictly concave function with linear constraints. 
The notation $\mu << q$ denotes that $\supp(q) \supset \supp(\mu)$, in analogy with the standard notion of absolute continuity.
Fixing an arbitrary distribution over message profiles $q\in\Delta M$,
consider the set
\[
  \Ical_q=\Bigl\{\mu\in\Delta(A):\mu\ll q,\;\E_\mu[\log q]=\E_q[\log q]\Bigr\}.
\]
As shown in Appendix~\ref{app:direct}, such set denotes the set of
probability distributions $\mu\in\Delta A$ that may induce a certain belief
$q\in\Delta A$ under maximum entropy with an appropriate set of moment
conditions. Then, consider the set
\[
  \Ical=\bigcup_{q\in\CE(\Gamma)}\Ical_q.
\]

The interpretation of the set $\Ical$ relates to the distortion of beliefs
of the inducibility definition. Such set consists of all probability
distributions that, under an appropriate choice of feedback structure, can
induce a belief over action profiles corresponding to a correlated
equilibrium. Note that the set $\Ical$ is compact and, under certain
conditions---such as when the baseline game admits a unique correlated
equilibrium---is also convex.

\begin{proposition}\label{prop:direct}
The following conditions are equivalent:
\begin{enumerate}
  \item $\mu$ is directly implementable;
  \item $\mu\in\Ical$.
\end{enumerate}
\end{proposition}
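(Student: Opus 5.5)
The plan is to reduce Proposition~\ref{prop:direct} to Lemma~\ref{lem:induce}. Under direct implementation we have $M=A$, and each $\sigma_i$ is the obedient strategy $\sigma_i(a_i)=\indic_{a_i}$. I will first record two immediate consequences of this, and then prove the two implications separately.

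Obedience makes the pushforward trivial, so $\eta\circ\sigma=\eta$ and condition~3 of Definition~\ref{def:impl} reads $\mu=\eta$. Obedience also simplifies condition~2. Under obedient play, $\sigma_{-i}(m_{-i})$ is the point mass on $a_{-i}=m_{-i}$. The best-response inequality at a message $m_i=a_i$ with $q_{a_i}>0$ therefore becomes
\[
\sum_{a_{-i}}q_{a_i,a_{-i}}\bigl[u_i(a_i,a_{-i})-u_i(b,a_{-i})\bigr]\geq 0\quad\forall b\in A_i .
\]
Imposed only at messages with $q_{a_i}>0$, as the text following Definition~\ref{def:impl} stipulates, this is exactly the correlated-equilibrium condition for the belief $q$. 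So $\mu$ is directly implementable if and only if there is a feedback structure $\Fcal$ on $A$ such that the maximum-entropy solution $q$ of \eqref{eq:B} under $\Dcal=(A,\mu,\Fcal)$ belongs to $\CE(\Gamma)$. Equivalently, some $q\in\CE(\Gamma)$ can be induced from $\mu$.

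\emph{(1)$\Rightarrow$(2).} Take the feedback structure and the induced belief $q$ delivered by direct implementability. By the reduction above, $q\in\CE(\Gamma)$. The necessity part of Lemma~\ref{lem:induce} gives $\supp(\mu)\subseteq\supp(q)$, i.e.\ $\mu\ll q$, and $\E_\mu[\log q]=\E_q[\log q]$. Hence $\mu\in\Ical_q\subseteq\Ical$.

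\emph{(2)$\Rightarrow$(1).} If $\mu\in\Ical$, pick $q\in\CE(\Gamma)$ with $\mu\in\Ical_q$. The two conditions defining $\Ical_q$ are exactly the hypotheses of Lemma~\ref{lem:induce}, so there is an $\Fcal$ under which $q$ is the maximum-entropy belief for the true process $\eta=\mu$. With obedient strategies, condition~3 holds because $\eta\circ\sigma=\mu$, and condition~2 holds because $q$ is a correlated equilibrium.

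The step needing the most care is the one on which the reduction rests: showing that the obedience inequalities coincide exactly with the correlated-equilibrium constraints. The key point is that incentive constraints are imposed only at messages to which the belief $q$ assigns positive probability, which matches the conditioning on $q_{a_i}>0$ in the definition of $\CE$. Messages $a_i$ with $\mu_{a_i}>0$ are automatically covered, since $\mu\ll q$. It also matters that $\log q$ is evaluated only on $\supp(q)$, which is well defined precisely because $\mu\ll q$. Beyond this, the argument is a direct invocation of Lemma~\ref{lem:induce}, so the substantive work, namely the KKT characterization of inducibility, is carried by that lemma.
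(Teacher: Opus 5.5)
Your proposal is correct and follows the paper's route. With $M=A$ and obedient strategies you get $\eta=\mu$, and the incentive constraints become the correlated-equilibrium constraints for $q$, so direct implementability means exactly that some $q\in\CE(\Gamma)$ can be induced from $\mu$; Lemma~\ref{lem:induce} then gives both directions. Your extra care about conditioning on $q_{a_i}>0$ is harmless but not needed, since when $q_{a_i}=0$ the obedience sum and the correlated-equilibrium sum both vanish identically.
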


The proof is contained in the appendix. The idea of the proof is as
follows. In the canonical setting, the incentive compatibility condition
requires that the recommended action be a best response to the player's
belief about the behaviour of their opponents. This requirement corresponds
to a correlated equilibrium with respect to the correlated belief. The set
$\Ical$ identifies a necessary and sufficient condition for transforming the
belief over the true data-generating process into a correlated-equilibrium
belief through the selection of an appropriate feedback structure.

\begin{remark}[Information-geometric interpretation of direct implementation.]
The defining condition $\E_\mu[\log q]=\E_q[\log q]=-\Hcal(q)$ of
$\Ical_q$ admits a transparent information-theoretic interpretation.
Defining the cross-entropy $H(\mu,q)=-\E_\mu[\log q]$ and the
Kullback--Leibler divergence $\KL(\mu\Vert q)=-\Hcal(\mu)-\E_\mu[\log q]$,
the condition is equivalent to
\[
  H(\mu,q)=\Hcal(q)\qquad\text{or equivalently}\qquad
  \KL(\mu\Vert q)=\Hcal(q)-\Hcal(\mu).
\]
Geometrically, $\mu$ lies on the cross-entropy level set of $q$ that
passes through $q$ itself: $H(\mu,q)=H(q,q)$. Together with the support
condition, this is a single linear constraint on $\mu\in\Delta A$, defining
a $(\lvert\supp(q)\rvert-2)$-dimensional polytope inside
$\Delta(\supp(q))$.
\end{remark}

As a corollary, in the case of games admitting a unique correlated equilibrium, direct implementability reduces to support inclusion together with a single linear constraint, thereby implying that the set of directly implementable outcomes is convex.

\begin{corollary}[Games with unique CE]\label{cor:uniqueCE}
If a game admits a unique correlated equilibrium $q^*$, then
\[
  \Ical=\Ical_{q^*}=\bigl\{\mu\in\Delta A:\supp(\mu)\subseteq\supp(q^*),\;
  \E_\mu[\log q^*]=\E_{q^*}[\log q^*]\bigr\}.
\]
\end{corollary}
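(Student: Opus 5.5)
This follows directly from Proposition~\ref{prop:direct}, so the plan is to unpack the definition of $\Ical$ when the correlated equilibrium set is a singleton. By Proposition~\ref{prop:direct}, $\mu$ is directly implementable if and only if $\mu\in\Ical=\bigcup_{q\in\CE(\Gamma)}\Ical_q$. If $\CE(\Gamma)=\{q^*\}$, the union has a single term, so $\Ical=\Ical_{q^*}$ immediately. No new argument about entropy or feedback structures is needed, because those are already contained in Lemma~\ref{lem:induce} and Proposition~\ref{prop:direct}.

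The second equality is a rewriting of $\Ical_{q^*}$. The relation $\mu\ll q^*$ was defined to mean $\supp(\mu)\subseteq\supp(q^*)$. The constraint $\E_\mu[\log q^*]=\E_{q^*}[\log q^*]$ carries over verbatim.

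The only point needing care is that $\E_\mu[\log q^*]$ is well defined. Since $\log q^*$ equals $-\infty$ outside $\supp(q^*)$, the expectation is finite only under the support condition. I would therefore note explicitly that both the cross-entropy constraint and the absolute-continuity convention are read with the support restriction imposed first. This is the sole step where one might worry, and it is purely a matter of convention.

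The convexity claim in the surrounding text also follows once $q^*$ is fixed. The support restriction defines a face of $\Delta A$, and $\mu\mapsto\E_\mu[\log q^*]=\sum_{a\in\supp(q^*)}\mu_a\log q^*_a$ is linear in $\mu$. Hence $\Ical_{q^*}$ is the intersection of a face of the simplex with an affine hyperplane, which is a convex polytope. It is nonempty because $q^*$ itself belongs to it.
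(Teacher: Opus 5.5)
Your proposal is correct and matches the paper's implicit argument: since $\Ical$ is defined as $\bigcup_{q\in\CE(\Gamma)}\Ical_q$, the union collapses to $\Ical_{q^*}$ when $\CE(\Gamma)=\{q^*\}$, and the second equality just unpacks the convention that $\mu\ll q^*$ means $\supp(\mu)\subseteq\supp(q^*)$. The displayed equalities hold by definition and do not need Proposition~\ref{prop:direct}, which only supplies the reading of $\Ical$ as the set of directly implementable outcomes; your convexity remark (a face of the simplex cut by a hyperplane, containing $q^*$) is a correct addition.
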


\section{Examples}\label{sec:examples}

This section presents a series of simple, illustrative examples that
clarify the notion of implementation and the construction of partially
specified data-generating processes. Examples~\ref{ex:chicken} and
\ref{ex:coord} illustrate the implementation of
outcomes that do not correspond to correlated equilibria. Each example
begins with a standard simultaneous-move game and then describes a
data-generating process, the players' knowledge of such process, and the
maximum-entropy belief over it. Example \ref{ex:direct} illustrates the construction beyond the case of simple feedback structures consisting solely of indicator functions.

In Examples ~\ref{ex:chicken},~\ref{ex:coord} and ~\ref{ex:direct}, the
data-generating process corresponds to the implementable outcome, while the
table labelled \emph{Information Disclosed} summarizes the set of plausible
data-generating processes that are consistent with the feedback structure.
Example ~\ref{ex:large} illustrates how expanding the cardinality of the
message set enables the implementation of additional outcomes. Finally,
Example ~\ref{ex:construction} shows how to construct a partially specified
data-generating process from a given correlated-equilibrium belief.

\exname{chicken}{A Chicken Game and the Principle of Insufficient Reason}


\label{ex:chicken}

Consider the Chicken game represented in Figure \ref{tab:chicken}, where the correlated equilibrium that maximizes total welfare is shown on Figure \ref{tab:optimal_ce}: 
\begin{figure}[H]
  \centering
  \subfloat[Chicken game\label{tab:chicken}]{
    \begin{tabular}{cc|c|c|}
      & \multicolumn{1}{c}{} & \multicolumn{2}{c}{}\\
      & \multicolumn{1}{c}{} & \multicolumn{1}{c}{$b_1$}  & \multicolumn{1}{c}{$b_2$} \\\cline{3-4}
      \multirow{2}*{}  & $a_1$ & $5,5$ & $2,7$ \\\cline{3-4}
      & $a_2$ & $7,2$ & $0,0$ \\\cline{3-4}
    \end{tabular}
  }
  \quad
  \subfloat[Optimal CE\label{tab:optimal_ce}]{
    \begin{tabular}{cc|c|c|}
      & \multicolumn{1}{c}{} & \multicolumn{2}{c}{}\\
      & \multicolumn{1}{c}{} & \multicolumn{1}{c}{$b_1$}  & \multicolumn{1}{c}{$b_2$} \\\cline{3-4}
      \multirow{2}*{}  & $a_1$ & $1/3$ & $1/3$ \\\cline{3-4}
      & $a_2$ & $1/3$ & $0$ \\\cline{3-4}
    \end{tabular}
  }
  \caption{Optimal correlation device in a Chicken game}
  \label{fig:chicken_ce}
\end{figure}
Consider the data generating process represented on Figure \ref{tab:chicken_DGP} and the disclosure policy of Figure \ref{tab:chicken_info}. Such disclosure policy can be attained by the feedback structure $\mathcal{F}= \{\mathbbm{1}_{m_2,m'_2}\}$.
Given the available information, the distribution that maximizes Shannon entropy is that shown in Figure \ref{tab:chicken_belief} -- the uniform distribution over the support consistent with the known information and aligned with the principle of insufficient reason.

\begin{figure}[H]
  \centering
  \subfloat[True DGP\label{tab:chicken_DGP}]{
   \begin{tabular}{cc|c|c|}
    & \multicolumn{1}{c}{} & \multicolumn{2}{c}{}\\
    & \multicolumn{1}{c}{} & \multicolumn{1}{c}{$m'_1$}  & \multicolumn{1}{c}{$m'_2$} \\\cline{3-4}
    \multirow{2}*{}  & $m_1$ & $1$ & $0$ \\\cline{3-4}
    & $m_2$ & $0$ & $0$ \\\cline{3-4}
  \end{tabular}
  }
  \quad \quad \quad \quad \quad \quad \quad
  \subfloat[Information disclosed \label{tab:chicken_info}]{
\begin{tabular}{cc|c|c|}
    & \multicolumn{1}{c}{} & \multicolumn{2}{c}{}\\
    & \multicolumn{1}{c}{} & \multicolumn{1}{c}{\;$m'_1$\;}  & \multicolumn{1}{c}{\:$m'_2$\:} \\\cline{3-4}
    \multirow{2}*{}  & $m_1$ & ? & ? \\\cline{3-4}
    & $m_2$ & ? & $0$ \\\cline{3-4}
  \end{tabular}
  }
   \quad \quad \quad \quad \quad \quad \quad
  \subfloat[Belief on the DGP \label{tab:chicken_belief}]{
\begin{tabular}{cc|c|c|}
    & \multicolumn{1}{c}{} & \multicolumn{2}{c}{}\\
    & \multicolumn{1}{c}{} & \multicolumn{1}{c}{$m'_1$}  & \multicolumn{1}{c}{$m'_2$} \\\cline{3-4}
    \multirow{2}*{}  & $m_1$ & $1/3$ & $1/3$ \\\cline{3-4}
    & $m_2$ & $1/3$ & $0$ \\\cline{3-4}
  \end{tabular}
  }
   \caption{Information disclosure and belief formation}
  \label{fig:chicken_ce_dgp}
\end{figure}

The common belief of Figure \ref{tab:chicken_belief} sustains a correlated equilibrium of the game. Therefore, the degenerate outcome $\mu = \mathbbm{1}_{a_1,b_1}$ is implementable, guaranteeing for both players a higher payoff than the welfare maximizing correlated equilibrium.

\exname{coord}{A Coordination Game and Correlation Neglect}

\label{ex:coord}

Correlation neglect is a well documented behavioral anomaly (\citealt{enke2019}, \citealt{levy2022}). This simple example shows that, in a coordination game where the correlation structure is not known, the outcome can be detrimental. Consider the coordination game of Figure \ref{coordination_game}. 
\begin{figure}[H] 
  \centering
  \setlength{\extrarowheight}{2pt}
  \begin{tabular}{cc|c|c|}
    & \multicolumn{1}{c}{} & \multicolumn{2}{c}{}\\
    & \multicolumn{1}{c}{} & \multicolumn{1}{c}{$b_1$}  & \multicolumn{1}{c}{$b_2$} \\\cline{3-4}
    \multirow{2}*{}  & $a_1$ & $2,1$ & $0,0$ \\\cline{3-4}
    & $a_2$ & $0,0$ & $1,2$ \\\cline{3-4}
  \end{tabular}
  \caption{A coordination game}
  \label{coordination_game}
\end{figure}
In this example, if players exhibit correlation neglect, a randomization on the anti-diagonal can be implemented. Consider indeed a data generating process corresponding to Figure \ref{tab:coordination_DGP} and a disclosure policy corresponding to \ref{tab:coordination_info}, where only the marginals are known. Such disclosure can be attained, for example, by $\mathcal{F}=\{\mathbbm{1}_{m_1},\mathbbm{1}_{m_2},\mathbbm{1}_{m'_1},\mathbbm{1}_{m'_2}\}$. 

\begin{figure}[H]
  \centering
  \subfloat[True DGP \label{tab:coordination_DGP}]{
    \begin{tabular}{cc|c|c|}
      & \multicolumn{1}{c}{} & \multicolumn{2}{c}{}\\
      & \multicolumn{1}{c}{} & \multicolumn{1}{c}{$m'_1$}  & \multicolumn{1}{c}{$m'_2$} \\\cline{3-4}
      \multirow{2}*{}  & $m_1$ & $0$ & $2/3$ \\\cline{3-4}
      & $m_2$ & $1/3$ & $0$ \\\cline{3-4}
    \end{tabular}
  }
  \quad \quad \quad \quad \quad \quad
  \subfloat[Information disclosed \label{tab:coordination_info}]{
    \begin{tabular}{cc|c|c|}
      & \multicolumn{1}{c}{} & \multicolumn{2}{c}{}\\
      & \multicolumn{1}{c}{} & \multicolumn{1}{c}{$1/3$}  & \multicolumn{1}{c}{$2/3$} \\\cline{3-4}
      \multirow{2}*{}  & $2/3$ & ? & ? \\\cline{3-4}
      & $1/3$ & ? & ? \\\cline{3-4}
    \end{tabular}
  }
   \quad \quad \quad \quad \quad \quad
  \subfloat[Belief on the DGP \label{tab:coordination_belief}]{
    \begin{tabular}{cc|c|c|}
      & \multicolumn{1}{c}{} & \multicolumn{2}{c}{}\\
      & \multicolumn{1}{c}{} & \multicolumn{1}{c}{$1/3$}  & \multicolumn{1}{c}{$2/3$} \\\cline{3-4}
      \multirow{2}*{}  & $2/3$ & $2/9$ & $4/9$ \\\cline{3-4}
      & $1/3$ & $1/9$ & $2/9$ \\\cline{3-4}
    \end{tabular}
  }
   \caption{Information disclosure and belief formation}
   \label{fig:coordination_game}
\end{figure}
The joint distribution that maximizes the Shannon entropy given two fixed marginals is indeed the product of the marginals (see \citealt{cover2006}, pag 421).
Therefore, the belief on the data generating process is the probability distribution on Table \ref{tab:coordination_belief}.
Consider the strategy for player $1$ that maps message $m_1$  to action $a_1$ and message $m_2$ to action $a_2$, and the strategy of player $2$ maps $m'_1$ to action $b_1$ and $m'_2$ to $b_2$.  
Given the belief on the data generating process, playing such strategies is a best response. Indeed, the message distribution, when messages are interpreted as action recommendations, constitutes a Nash equilibrium of the game. We conclude that the outcome that randomizes over the action profiles according to the data generating process can be implemented. This outcome is not a correlated equilibrium.

\begin{remark}
Anti-coordination outcomes lie strictly outside the correlated-equilibrium
polytope of the coordination game: no correlated equilibrium puts positive mass on the anti-diagonal alone. Yet, as this example shows, they are jointly coherent
(both pure profiles are Nash equilibria and hence in $\supp\CE(\Gamma)$)
and fully implementable via a feedback structure that reveals only the marginals. The mechanism is precisely correlation neglect: perceiving signals as independent, players follow private information and miscoordinate.

The result generalizes: in any two-player coordination game with pure Nash
equilibria on the diagonal, the disclosure of the marginals with a perfectly
anti-correlated true DGP implements an anti-coordination outcome on the
anti-diagonal. The gap between the true DGP
(anti-correlated) and the maximum entropy belief (independent product) allows to sustain outcomes that no correlated equilibrium device can directly support.
\end{remark}
\exname{direct}{Direct Information Structure}

\label{ex:direct}

Consider the two-player game represented in Figure \ref{fig:normal_form_game}.
\begin{figure}[H]
  \centering
    \setlength{\extrarowheight}{2pt}
    \begin{tabular}{cc|c|c|c|}
      & \multicolumn{1}{c}{} & \multicolumn{3}{c}{}\\
      & \multicolumn{1}{c}{} & \multicolumn{1}{c}{$h$} & \multicolumn{1}{c}{$m$} & \multicolumn{1}{c}{$l$} \\\cline{3-5}
      \multirow{3}*{}  & $h$ & $9,9$ & $4,6$ & $1,10$ \\\cline{3-5}
      & $m$ & $6,4$ & $6,6$ & $0,0$ \\\cline{3-5}
      & $l$ & $10,1$ & $0,0$ & $6,6$ \\\cline{3-5}
    \end{tabular}
  \caption{A game.}
  \label{fig:normal_form_game}
\end{figure}

Players receive correlated information in the form of action recommendations according to the data-generating process $\mu \in \Delta \text{A}$ depicted in Figure \ref{tab:normal_form_game_DGP}.  
Such outcome, despite not being a correlated equilibrium of the game, can be implemented within the current framework. Consider indeed the belief over action recommendations represented in Figure \ref{tab:normal_form_game_belief}. Under such belief, accepting the action recommendation constitutes a best response.
\begin{figure}[H]
  \centering
  \subfloat[DGP \label{tab:normal_form_game_DGP}]{
    \setlength{\extrarowheight}{2pt}
    \begin{tabular}{cc|c|c|c|}
      & \multicolumn{1}{c}{} & \multicolumn{3}{c}{}\\
      & \multicolumn{1}{c}{} & \multicolumn{1}{c}{$h$} & \multicolumn{1}{c}{$m$} & \multicolumn{1}{c}{$l$} \\\cline{3-5}
      \multirow{3}*{}  & $h$ & $\frac{1}{4}$ & $0$ & $0$ \\\cline{3-5}
      & $m$ & $0$ & $\frac{1}{2}$ & $0$ \\\cline{3-5}
      & $l$ & $0$ & $0$ & $\frac{1}{4}$ \\\cline{3-5}
    \end{tabular}
  }
  \quad
  \subfloat[Belief  \label{tab:normal_form_game_belief}]{
    \setlength{\extrarowheight}{2pt}
     \begin{tabular}{cc|c|c|c|}
      & \multicolumn{1}{c}{} & \multicolumn{3}{c}{}\\
      & \multicolumn{1}{c}{} & \multicolumn{1}{c}{$h$} & \multicolumn{1}{c}{$m$} & \multicolumn{1}{c}{$l$} \\\cline{3-5}
      \multirow{3}*{}  & $h$ & $ \frac{1}{12}$ & $\frac{1}{12}$ & $\frac{1}{20}$ \\\cline{3-5}
      & $m$ & $\frac{1}{12}$ & $\frac{1}{2}$ & $\frac{1}{20}$ \\\cline{3-5}
      & $l$ & $\frac{1}{20}$ & $\frac{1}{20}$ & $\frac{1}{20}$ \\\cline{3-5}
    \end{tabular}
  }
  \caption{DGP and belief.}
  \label{fig:DGP_and_belief_1}
\end{figure}
Consider the real valued function
$f: \{h,m,l\}\times \{h,m,l\} \to \mathbb{R}$, defined as
\[
f = 
\begin{cases}
\log_2 3 +2 & \text{if } (h,h)\text{ or } (h,m)\text{ or }  (m,h)\\
1 & \text{if }  (m,m) \\
\log_2 5 +2 & \text{if }  (h,l)\text{ or } (l,h) \text{ or } (m,l) \text{ or } (l,m) \text{ or } (l,l)
\end{cases}.
\]
Players are informed about the expected value of this random variable under the true data-generating process. Hence, the set of plausible data generating processes is 
\[\Delta_{\mathcal{D}}= \left\{q \in \Delta A:\ \sum_{a \in A} q_a f_a = \frac{3}{2} + \frac{1}{4}\log_2 15\right\}.\]
Maximizing the Shannon entropy on $\Delta_{\mathcal{D}}$ yields the desired belief of Figure \ref{tab:normal_form_game_belief}. Therefore, upon receiving an action recommendation, following such recommendation constitutes a best response.

\exname{large}{Arbitrary Information Structures}

\label{ex:large}

\textbf{(a)} Consider the matching pennies game of Figure \ref{fig:matching_pennies}. 
\begin{figure}[H]
  \centering
  \begin{minipage}[t]{0.45\textwidth}
    \centering
    \begin{tabular}{cc|c|c|}
      & \multicolumn{1}{c}{} & \multicolumn{2}{c}{} \\
      & \multicolumn{1}{c}{} & \multicolumn{1}{c}{$b_1$} & \multicolumn{1}{c}{$b_2$} \\\cline{3-4}
      \multirow{2}*{} & $a_1$ & $2,-2$ & $\;0\;,\;0\;$ \\\cline{3-4}
      & $a_2$ & $\;0\;,\;0\;$ & $1,-1$ \\\cline{3-4}
    \end{tabular}
        \subcaption{Game payoffs}
  \end{minipage}
  \begin{minipage}[t]{0.45\textwidth}
    \centering
\begin{tabular}{cc|c|c|}
      & \multicolumn{1}{c}{} & \multicolumn{2}{c}{} \\
      & \multicolumn{1}{c}{} & \multicolumn{1}{c}{$b_1$} & \multicolumn{1}{c}{$b_2$} \\\cline{3-4}
      \multirow{2}*{} & $a_1$ & $1/9$ & $2/9$ \\\cline{3-4}
      & $a_2$ & $2/9$ & $4/9$ \\\cline{3-4}
    \end{tabular}
    \subcaption{Correlated equilibrium}
  \end{minipage}

  \caption{Matching pennies.}
  \label{fig:matching_pennies}
\end{figure}

The game has a unique correlated equilibrium. By
Corollary~\ref{cor:uniqueCE}, an outcome $\mu\in\Delta(A)$ is directly
implementable if and only if $\supp(\mu)\subseteq A$ and
$\E_\mu[\log q^*]=-\Hcal(q^*)$. The unique correlated equilibrium is
$q^*=(1/9,2/9,2/9,4/9)$ on $(a_1b_1,a_1b_2,a_2b_1,a_2b_2)$. Computing
$\log(1/9)=-2\log 3$, $\log(2/9)=\log 2-2\log 3$, $\log(4/9)=2\log 2-2\log 3$,
and equating $\E_\mu[\log q^*]=\E_{q^*}[\log q^*]$ after substituting
$\mu_{a_2b_2}=1-\mu_{a_1b_1}-\mu_{a_1b_2}-\mu_{a_2b_1}$ yields the single
linear constraint
\[
  2\mu_{a_1,b_1}+\mu_{a_1,b_2}+\mu_{a_2,b_1}=\tfrac{2}{3}.
\]
This is exactly the cross-entropy level-set equation through $q^*$: any
directly implementable outcome must lie on the hyperplane defined by this
equation inside the simplex. The constraint encodes how far $\mu$ can deviate
from $q^*$ while still being compatible with obedience given the maximum entropy belief.

Additional outcomes are implementable with larger message sets.
For example, consider a degenerate outcome $\mu_{a_1,b_1}=1$. Consider the message set of Table \ref{tab:matching_pennies_messages_strategies}, associated to the corresponding strategy profile: player $1$ maps message $m_1$ to $a_1$ and messages $\{m_2, m_3\}$ to $a_2$; symmetrically, player $2$ maps message $m'_1$ to $b_1$ and messages $\{m'_2, m'_3\}$ to $b_2$. 
\begin{figure}[H]
  \centering
  \subfloat[Messages and strategies \label{tab:matching_pennies_messages_strategies}]{
    \setlength{\extrarowheight}{2pt}
    \begin{tabular}{cc|c|c|c|}
      & \multicolumn{1}{c}{} & \multicolumn{3}{c}{}\\
      & \multicolumn{1}{c}{} & \multicolumn{1}{c}{$m'_1$} & \multicolumn{1}{c}{$m'_2$} & \multicolumn{1}{c}{$m'_3$} \\\cline{3-5}
      \multirow{3}*{}  & $m_1$ & $a_1,b_1$ & $a_1,b_2$ & $a_1,b_2$ \\\cline{3-5}
      & $m_2$ & $a_2,b_1$ & $a_2,b_2$ & $a_2,b_3$ \\\cline{3-5}
      & $m_3$ & $a_2,b_1$ & $a_2,b_2$ & $a_2,b_2$ \\\cline{3-5}
    \end{tabular}
  }
  \quad
  \subfloat[Belief \label{tab:matching_pennies_target_belief}]{
    \setlength{\extrarowheight}{2pt}
    \begin{tabular}{cc|c|c|c|}
      & \multicolumn{1}{c}{} & \multicolumn{3}{c}{}\\
      & \multicolumn{1}{c}{} & \multicolumn{1}{c}{$m'_1$} & \multicolumn{1}{c}{$m'_2$} & \multicolumn{1}{c}{$m'_3$} \\\cline{3-5}
      \multirow{3}*{}  & $m_1$ & $1/9$ & $1/9$ & $1/9$ \\\cline{3-5}
      & $m_2$ & $1/9$ & $1/9$ & $1/9$ \\\cline{3-5}
      & $m_3$ & $1/9$ & $1/9$ & $1/9$ \\\cline{3-5}
    \end{tabular}
  }
  \caption{Messages, strategies and belief.}
  \label{fig:map_and_belief}
\end{figure}
The feedback structure is $\Fcal=\emptyset$, hence the maximum-entropy
distribution (Table~\ref{tab:matching_pennies_target_belief}) with no constraints is the uniform
distribution (see \citealt{cover2006}, p.~29). Given such belief, the
strategies of players are best responses. A data-generating process that
always draws $(m_1,m_1')$ implements the target outcome.

\textbf{(b)} Consider the game described in Figure \ref{tab: normal_form_game_1_payoff}. This game has a unique correlated equilibrium, shown in Figure  \ref{tab: normal_form_game_1_CE}. Consider the target outcome depicted in Figure \ref{tab: normal_form_game_1_outcome}. Such outcome can be implemented through partial revelation of information.
\begin{figure}[H]
  \centering
  \subfloat[Payoffs \label{tab: normal_form_game_1_payoff}]{
   \begin{tabular}{cc|c|c|}
    & \multicolumn{1}{c}{} & \multicolumn{2}{c}{}\\
    & \multicolumn{1}{c}{} & \multicolumn{1}{c}{$b_1$}  & \multicolumn{1}{c}{$b_2$} \\\cline{3-4}
    \multirow{2}*{}  & $a_1$ & $2,0$ & $0,1$ \\\cline{3-4}
    & $a_2$ & $0,1$ & $1,0$ \\\cline{3-4}
  \end{tabular}
  }
  \quad \quad \quad \quad \quad \quad \quad
  \subfloat[Correlated Equilibrium \label{tab: normal_form_game_1_CE}]{
\begin{tabular}{cc|c|c|}
    & \multicolumn{1}{c}{} & \multicolumn{2}{c}{}\\
    & \multicolumn{1}{c}{} & \multicolumn{1}{c}{\;$b_1$\;}  & \multicolumn{1}{c}{\:$b_2$\:} \\\cline{3-4}
    \multirow{2}*{}  & $a_1$ & $1/6$ & $2/6$ \\\cline{3-4}
    & $a_2$ & $1/6$ & $2/6$ \\\cline{3-4}
  \end{tabular}
  }
   \quad \quad \quad \quad \quad \quad \quad
  \subfloat[Target outcome \label{tab: normal_form_game_1_outcome}]{
\begin{tabular}{cc|c|c|}
    & \multicolumn{1}{c}{} & \multicolumn{2}{c}{}\\
    & \multicolumn{1}{c}{} & \multicolumn{1}{c}{$b_1$}  & \multicolumn{1}{c}{$b_2$} \\\cline{3-4}
    \multirow{2}*{}  & $a_1$ & $1/3$ & $1/3$ \\\cline{3-4}
    & $a_2$ & $1/3$ & $0$ \\\cline{3-4}
  \end{tabular}
  }
   \caption{A game}
     \label{fig:DGP_and_belief_2}
\end{figure}
The set of messages and the corresponding data-generating process are shown in Figure \ref{tab:normal_form_game_1_DGP}. The information disclosure policy is illustrated in Figure \ref{tab:normal_form_game_1_info_disclosure}. Players are informed about the probability assigned to certain events, specifically players know that some pairs of messages will never be drawn. The maximum entropy belief of players is depicted in Figure \ref{tab:normal_form_game_1_belief} and the information disclosure policy corresponds to the set
\[\mathcal{F} = \left\{\mathbbm{1}_{\{m_1,m'_2\}},\mathbbm{1}_{\{m_2,m'_1\}},\mathbbm{1}_{\{m_4,m'_1\}},\mathbbm{1}_{\{m_3,m'_2\}} \right\}.\]
\begin{figure}[H]
  \centering
  \subfloat[Data Generating Process  \label{tab:normal_form_game_1_DGP}]{
    \setlength{\extrarowheight}{2pt}
\begin{tabular}{cc|c|c|c|c|}
      & \multicolumn{1}{c}{} & \multicolumn{4}{c}{}\\
      & \multicolumn{1}{c}{} & \multicolumn{1}{c}{$m'_1$}  & \multicolumn{1}{c}{$m'_2$} & \multicolumn{1}{c}{$m'_3$} & \multicolumn{1}{c}{$m'_4$} \\\cline{3-6}
      \multirow{4}*{}  & $m_1$ & $1/3$ & $0$ & $1/3$ & $0$ \\\cline{3-6}
      & $m_2$ & $0$ & $0$ & $0$ & $0$ \\\cline{3-6}
      & $m_3$ & $1/3$ & $0$ & $0$ & $0$ \\\cline{3-6}
      & $m_4$ & $0$ & $0$ & $0$ & $0$ \\\cline{3-6}
\end{tabular}
  }
  \quad
  \subfloat[Information Disclosure \label{tab:normal_form_game_1_info_disclosure}]{
    \setlength{\extrarowheight}{2pt}
\begin{tabular}{cc|c|c|c|c|}
      & \multicolumn{1}{c}{} & \multicolumn{4}{c}{}\\
      & \multicolumn{1}{c}{} & \multicolumn{1}{c}{$m'_1$}  & \multicolumn{1}{c}{$m'_2$} & \multicolumn{1}{c}{$m'_3$} & \multicolumn{1}{c}{$m'_4$} \\\cline{3-6}
      \multirow{4}*{}  & $m_1$ & $?$ & $0$ & $?$ & $?$ \\\cline{3-6}
      & $m_2$ & $0$ & $?$ & $?$ & $?$ \\\cline{3-6}
      & $m_3$ & $?$ & $0$ & $?$ & $?$ \\\cline{3-6}
      & $m_4$ & $0$ & $?$ & $?$ & $?$ \\\cline{3-6}
\end{tabular}
  }
  \quad
  \subfloat[Belief \label{tab:normal_form_game_1_belief}]{
    \setlength{\extrarowheight}{2pt}
\begin{tabular}{cc|c|c|c|c|}
      & \multicolumn{1}{c}{} & \multicolumn{4}{c}{}\\
      & \multicolumn{1}{c}{} & \multicolumn{1}{c}{$m'_1$}  & \multicolumn{1}{c}{$m'_2$} & \multicolumn{1}{c}{$m'_3$} & \multicolumn{1}{c}{$m'_4$} \\\cline{3-6}
      \multirow{4}*{}  & $m_1$ & $1/12$ & $0$ & $1/12$ & $1/12$ \\\cline{3-6}
      & $m_2$ & $0$ & $1/12$ & $1/12$ & $1/12$ \\\cline{3-6}
      & $m_3$ & $1/12$ & $0$ & $1/12$ & $1/12$ \\\cline{3-6}
      & $m_4$ & $0$ & $1/12$ & $1/12$ & $1/12$ \\\cline{3-6}
\end{tabular}
  }
  \caption{Data Generating Process, Information Disclosure and Beliefs}
  \label{fig:larger_message_set_info_belief}
\end{figure}

The partially specified data-generating process described above implements
the target outcome as follows: player~1 plays $a_1$ if either message $m_1$
or $m_2$ is received, and $a_2$ if either message $m_3$ or $m_4$ is
received. Similarly, player~2 plays action $b_1$ if either message $m_1'$
or $m_2'$ is received, and $b_2$ if either message $m_3'$ or $m_4'$ is
received. This joint play, together with the true data-generating process,
implements the target outcome in the sense that, given the belief and the
partial information, following the prescribed strategy constitutes a best
response.


\exname{construction}{Construction of a Partially Specified DGP}

\label{ex:construction}

This example illustrates the construction of a partially specified data generating process required to implement a target outcome as in Proposition \ref{prop:main}. Given as input a correlated distribution over action profiles, interpreted as a target belief that accommodates the implementation of a target outcome, the objective is to construct a message set, a belief over this message set, and a strategy profile that together induce such target belief. The feedback structure consists of a collection of indicator random variables defined over the message set, indexed by messages to which the belief assigns probability zero. 

The examples that follow illustrate a target belief over the action sets (Table $(a)$), and the construction of a set of messages and a belief over such messages (Table $(b)$), and a set of players' strategies that induces the target belief. 

\textbf{(a)}

\begin{figure}[H]
  \centering  
  \subfloat[Target belief]{
    \begin{tabular}{cc|c|c|}
      & \multicolumn{1}{c}{} & \multicolumn{2}{c}{}\\
      & \multicolumn{1}{c}{} & \multicolumn{1}{c}{$b_1$}  & \multicolumn{1}{c}{$b_2$} \\\cline{3-4}
      \multirow{2}*{}  & $a_1$ & $1/4$ & $1/4$ \\\cline{3-4}
      & $a_2$ & $2/4$ & $0$\\\cline{3-4}
      
    \end{tabular}
  }
   \quad \quad \quad \quad \quad \quad
  \subfloat[Messages and belief]{\begin{tabular}{cc|c|c|}
      & \multicolumn{1}{c}{} & \multicolumn{2}{c}{}\\
      & \multicolumn{1}{c}{} & \multicolumn{1}{c}{$m'_1$}  & \multicolumn{1}{c}{$m'_2$} \\\cline{3-4}
      \multirow{2}*{}  & $m_1$ & $1/4$& $1/4$ \\\cline{3-4}
      & $m_2$ & $1/4$ & $0$\\\cline{3-4}
      \cline{3-4}
      & $m_3$ & $1/4$ & $0$\\\cline{3-4}
    \end{tabular}
  }
   \caption{Incentive compatibility, messages and beliefs}
\end{figure}
The feedback structure is $\mathcal{F} = \{\mathbbm{1}_{m_2,m'_2},\mathbbm{1}_{m_3,m'_2}\}$. The strategy of player $1$ maps message $m_1$ to $a_1$, and messages $\{m_2, m_3\}$ to $a_2$. The strategy of player $2$ maps $m'_1$ to $b_1$ and $m'_2$ to $b_2$.

\textbf{(b)}
\begin{figure}[H]
  \centering  
  \subfloat[Target belief]{
    \begin{tabular}{cc|c|c|c|}
      & \multicolumn{1}{c}{} & \multicolumn{3}{c}{}\\
      & \multicolumn{1}{c}{} & \multicolumn{1}{c}{$b_1$}  & \multicolumn{1}{c}{$b_2$} & \multicolumn{1}{c}{$b_3$} \\\cline{3-5}
      \multirow{2}*{}  & $a_1$ & $2/5$ & $1/5$ & $0$ \\\cline{3-5}
      & $a_2$ & $1/5$ & $0$ & $1/5$\\\cline{3-5}
    \end{tabular}
  }
  \quad \quad \quad \quad \quad \quad
  \subfloat[Message and belief]{
    \begin{tabular}{cc|c|c|c|c|c|c|}
      & \multicolumn{1}{c}{} & \multicolumn{6}{c}{}\\
      & \multicolumn{1}{c}{} & \multicolumn{1}{c}{$m'_1$}  & \multicolumn{1}{c}{$m'_2$} & \multicolumn{1}{c}{$m'_3$} & \multicolumn{1}{c}{$m'_4$} & \multicolumn{1}{c}{$m'_5$} & \multicolumn{1}{c}{$m'_6$} \\\cline{3-8}
      \multirow{4}*{}  & $m_1$ & $1/10$ & $1/10$ & $1/10$ & $0$ & $0$ & $0$ \\\cline{3-8}
      & $m_2$ & $1/10$ & $1/10$ & $0$ & $1/10$ & $0$ & $0$ \\\cline{3-8}
      & $m_3$ & $1/10$ & $0$ & $0$ & $0$ & $1/10$ & $0$ \\\cline{3-8}
      & $m_4$ & $0$ & $1/10$ & $0$ & $0$ & $0$ & $1/10$ \\\cline{3-8}
    \end{tabular}
  }
  \caption{Incentive compatibility, messages and beliefs}
\end{figure}

The feedback structure consists of the set of indicator functions over the message profiles that, according to the belief on Table $(b)$, are assigned probability zero. The strategy of player $1$ maps messages $\{m_1,m_2\}$ to $a_1$, and messages $\{m_3,m_4\}$ to $a_2$. The strategy of player $2$ maps $\{m'_1,m'_2\}$ to $b_1$, $\{m'_3,m'_4\}$ to $b_2$, and $\{m'_5,m'_6\}$ to $b_3$.

\textbf{(c)}
\begin{figure}[H]
  \centering  
  \subfloat[Target belief]{
    \begin{tabular}{cc|c|c|}
      & \multicolumn{1}{c}{} & \multicolumn{2}{c}{}\\
      & \multicolumn{1}{c}{} & \multicolumn{1}{c}{$b_1$}  & \multicolumn{1}{c}{$b_2$} \\\cline{3-4}
      \multirow{2}*{}  & $a_1$ & $2/5$ & $0$ \\\cline{3-4}
      & $a_2$ & $0$ & $3/5$\\\cline{3-4}
    \end{tabular}
  }
  \quad \quad \quad \quad \quad \quad
  \subfloat[Message and belief]{
    \begin{tabular}{cc|c|c|c|c|c|}
      & \multicolumn{1}{c}{} & \multicolumn{5}{c}{}\\
      & \multicolumn{1}{c}{} & \multicolumn{1}{c}{$m'_1$}  & \multicolumn{1}{c}{$m'_2$} & \multicolumn{1}{c}{$m'_3$} & \multicolumn{1}{c}{$m'_4$} & \multicolumn{1}{c}{$m'_5$} \\\cline{3-7}
      \multirow{5}*{}  & $m_1$ & $1/5$ & $0$ & $0$ & $0$ & $0$ \\\cline{3-7}
      & $m_2$ & $0$ & $1/5$ & $0$ & $0$ & $0$ \\\cline{3-7}
      & $m_3$ & $0$ & $0$ & $1/5$ & $0$ & $0$ \\\cline{3-7}
      & $m_4$ & $0$ & $0$ & $0$ & $1/5$ & $0$ \\\cline{3-7}
      & $m_5$ & $0$ & $0$ & $0$ & $0$ & $1/5$ \\\cline{3-7}
    \end{tabular}
  }
  \caption{Incentive compatibility, messages and beliefs}
\end{figure}
The strategy of player $1$ maps messages $\{m_1,m_2\}$ to $a_1$, and messages $\{m_3,m_4,m_5\}$ to $a_2$. The strategy of player $2$ maps $\{m'_1,m'_2\}$ to $b_1$, $\{m'_3,m'_4,m'_5\}$ to $b_2$.

\section{Concluding Remarks and Directions for Future Research}\label{sec:concl}

This paper studies novel implications of information misspecification in
games. In a setting where messages are action recommendations and strategies
are obedient, set of implementable outcomes is determined by a condition relating the cross-entropy of the outcome to the set of correlated equilibria, thereby quantifying the extent to which the data-generating process can be distorted in a belief consistent with a correlated equilibrium.

By contrast, with more general message
sets, any jointly coherent outcome can be implemented. While our analysis
focuses on a maximum-entropy heuristic and a particular class of disclosure
policies (a set of random variables for which the players know the
expectation), further research is required to extend the analysis to
alternative heuristics or preferences.

Several directions for future research emerge from our analysis. Most
prominently, the framework developed here applies to complete-information
games with payoff-irrelevant messages. A natural extension is
to \emph{Bayesian games}, where messages may partially reveal a
payoff-relevant state of the world and the maximum-entropy inference must
be reconciled with prior beliefs about the state. Connecting this framework
to Bayesian persuasion \citep{kamenica2011} and Bayes-correlated equilibrium
\citep{bergemann2016} is left for future
research.

A second natural extension is to consider dynamic games and learning, where
information derived from past behavior should be incorporated; however, the
techniques developed in this paper cannot be applied off the shelf, as
additional informational and coherence constraints arising from dynamically
consistent behaviour would need to be addressed. Moreover, while this paper
focuses on partial information rather than ambiguity attitudes, explicitly
incorporating ambiguity attitudes represents a relevant avenue for future
work.

\begin{appendix}
\label{app:appendix}

\section{Appendix}

\subsection{Main Proofs}\label{sec:proofs}

\begin{lemma}\label{lem:finiteF}
The feedback structure is finite without loss of generality. For any
$\bar{\Fcal}\subset\{f:M\to\mathbb{R}\}$ there exists a finite
$\Fcal\subset\{f:M\to\mathbb{R}\}$ such that $q(M,\eta,\Fcal)=q(M,\eta,\bar{\Fcal})$
for any $\eta\in\Delta M$.
\end{lemma}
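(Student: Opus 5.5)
The plan is to work in the finite-dimensional vector space $\mathbb{R}^M$. Every $f:M\to\mathbb{R}$ is a vector in $\mathbb{R}^M$, and the constraint set depends on $\bar{\Fcal}$ only through the linear conditions $\langle q-\eta, f\rangle=0$ for $f\in\bar{\Fcal}$.

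Let $V=\mathrm{span}(\bar{\Fcal})\subseteq\mathbb{R}^M$. Since $\dim\mathbb{R}^M=\lvert M\rvert<\infty$, $V$ is finite-dimensional. I will choose a basis $\Fcal=\{f_1,\dots,f_k\}\subseteq\bar{\Fcal}$ with $k\le\lvert M\rvert$; such a basis extracted from the spanning set exists by elementary linear algebra. Note that $\Fcal$ depends only on $\bar{\Fcal}$, not on $\eta$, which is what makes the conclusion hold \emph{for any} $\eta$ simultaneously.

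Fix an arbitrary $\eta\in\Delta M$. I will show that $\Delta_{(M,\eta,\Fcal)}=\Delta_{(M,\eta,\bar{\Fcal})}$.

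\emph{Inclusion $\supseteq$.} This is immediate, since $\Fcal\subseteq\bar{\Fcal}$, so every constraint imposed by $\Fcal$ is already imposed by $\bar{\Fcal}$.

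\emph{Inclusion $\subseteq$.} Take $q$ satisfying $\sum_m q_m f_j(m)=\sum_m\eta_m f_j(m)$ for all $j$, and any $f\in\bar{\Fcal}$. Write $f=\sum_j c_j f_j$. By linearity of expectation,
\[
\E_q[f]=\sum_j c_j\E_q[f_j]=\sum_j c_j\E_\eta[f_j]=\E_\eta[f],
\]
so $q$ also satisfies every constraint in $\bar{\Fcal}$.

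Since the two constraint sets coincide, the problem~\eqref{eq:B} has the same feasible set under both feedback structures. This set is nonempty because it contains $\eta$, and it is compact and convex as an intersection of the simplex with affine subspaces. The strictly concave $\Hcal$ therefore has a unique maximizer on it, and that maximizer is the same for both structures, giving $q(M,\eta,\Fcal)=q(M,\eta,\bar{\Fcal})$.

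There is no real obstacle. The only points needing care are that the basis is chosen from $\bar{\Fcal}$ independently of $\eta$, and that existence and uniqueness of the maximizer follow from nonemptiness (via $\eta$), compactness, and strict concavity, as already noted in the text.
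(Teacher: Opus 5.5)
Your proof is correct and follows the paper's own argument: replace $\bar{\Fcal}$ by a basis of its span in $\mathbb{R}^M$ (at most $|M|$ functions), so both feedback structures impose the same moment constraints and therefore yield the same maximum-entropy belief. Your version is more careful than the paper's one-line proof on two points: you show explicitly that the basis does not depend on $\eta$, and you justify that the maximizer exists and is unique.
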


\begin{proof}[Proof of Lemma~\ref{lem:finiteF}]
$\Dcal=(M,\eta,\Fcal)$ and $\bar{\Dcal}=(M,\eta,\bar{\Fcal})$ define the
same linear system of equations, i.e., the constraints set of the
optimization problem~\eqref{eq:B}. It suffices to notice that, fixed $M$
and $\eta\in\Delta M$, there are at most $|M|$ linearly independent
vectors.
\end{proof}

To prove Proposition~\ref{prop:main}, we need some preliminary lemmata.

\begin{lemma}\label{lem:supp}
Let $\mu$ be implementable. Then $\supp(\mu)\subseteq\supp\CE(\Gamma)$.
\end{lemma}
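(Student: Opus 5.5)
The plan is to push the belief $q$ on messages forward to a belief on action profiles, show that this push-forward is a correlated equilibrium, and then show its support contains the support of $\mu$.

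\textbf{Setup.} Let $\mu$ be implemented by $\Dcal=(M,\eta,\Fcal)$ with strategy profile $\sigma$ and maximum-entropy belief $q$. Define the induced belief over actions $\nu=q\circ\sigma$, that is,
\[
  \nu(a)=\sum_{m\in M}q_m\prod_{i\in N}\sigma_i(a_i\mid m_i).
\]

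\textbf{Step 1: $\nu\in\CE(\Gamma)$.} This is the standard argument that incentive compatibility on an arbitrary message space yields a correlated equilibrium after collapsing messages into recommendations. Fix $i$, a recommended action $a_i$, and a deviation $b$. The obedience inequality for $a_i$ under $\nu$ equals
\[
  \sum_{m_i}\sigma_i(a_i\mid m_i)\sum_{m_{-i}}q_{m_i,m_{-i}}\bigl[u_i(a_i,\sigma_{-i}(m_{-i}))-u_i(b,\sigma_{-i}(m_{-i}))\bigr].
\]
Condition 2 of Definition~\ref{def:impl} says that $\sigma_i(m_i)$ is a best response to the conditional belief, so every $a_i$ in the support of $\sigma_i(m_i)$ is itself a best response. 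Hence each inner term is nonnegative, and since $\sigma_i(a_i\mid m_i)\ge 0$ the whole sum is nonnegative. I would spell this out briefly: an average of best responses is a best response only if each pure action in its support is a best response, which follows from linearity of expected utility in own mixed actions.

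\textbf{Step 2: $\supp(\mu)\subseteq\supp(\nu)$.} Here the maximum-entropy structure enters, and it is the main step. I will show $\supp(\eta)\subseteq\supp(q)$. Suppose $\eta_m>0$ but $q_m=0$. The set $\Delta_\Dcal$ is convex and contains $\eta$, so $q_t=(1-t)q+t\eta$ lies in $\Delta_\Dcal$ for $t\in[0,1]$. The directional derivative of $\Hcal$ at $t=0^+$ contains the term $-\eta_m\log(t\eta_m)$ coming from coordinate $m$, which contributes $+\infty$, while all coordinates with $q_{m'}>0$ contribute finite amounts. Thus $\Hcal(q_t)>\Hcal(q)$ for small $t>0$, contradicting optimality of $q$. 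This is exactly the ``steepness at the boundary'' property. Alternatively, one could cite Lemma~\ref{lem:induce}, condition 1, but the direct argument avoids dependence on the appendix.

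\textbf{Conclusion.} If $\mu(a)>0$, then some $m$ satisfies $\eta_m>0$ and $\prod_i\sigma_i(a_i\mid m_i)>0$. By Step 2, $q_m>0$, and therefore $\nu(a)>0$. By Step 1, $\nu\in\CE(\Gamma)$, so $a\in\supp\CE(\Gamma)$, which proves the claim.

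The only delicate point is Step 2: the one-sided derivative diverges, which needs care since $\Hcal$ is not differentiable at the boundary. I would handle it with the elementary bound $\Hcal(q_t)-\Hcal(q)\ge -t\eta_m\log(t\eta_m)-Ct$ for some constant $C$, whose right-hand side is positive for small $t>0$.
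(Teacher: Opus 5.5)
Your proof is correct and follows the paper's route. You push the maximum-entropy belief forward to $q\circ\sigma$, use obedience to show it is a correlated equilibrium, and use steepness of entropy to get $\supp(\eta)\subseteq\supp(q)$, hence $\supp(\eta\circ\sigma)\subseteq\supp(q\circ\sigma)$; you also supply two details the paper only asserts, namely the averaging argument that each pure action in $\supp\sigma_i(m_i)$ is a best response and the explicit $-t\eta_m\log(t\eta_m)$ bound along the segment from $q$ toward $\eta$.
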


\begin{proof}[Proof of Lemma~\ref{lem:supp}]
Let $\mu$ be implementable. Then there exists $\Dcal=(M,\eta,\Fcal)$ and
$\sigma$ such that Definition~\ref{def:impl} is satisfied. Let
$q\in\Delta M$ denote the maximum-entropy belief given $\Dcal$. The
best-response condition of Definition~\ref{def:impl} implies that the
couple $(q,\sigma)$ identifies a correlated equilibrium of $\Gamma$. Hence
the induced action distribution satisfies
\[
  p=q\circ\sigma\in\CE(\Gamma).
\]
By the belief-formation rule~\eqref{eq:B}, since the Shannon entropy is
strictly concave and steep at the boundary of the simplex, one has
$\supp(\eta)\subseteq\supp(q)$. It follows that
\[
  \supp(\mu)=\supp(\eta\circ\sigma)\subseteq\supp(q\circ\sigma)
  \subseteq\supp\CE(\Gamma).
\]
\end{proof}

For two-player games, the following simple lemma is enough to prove the
converse implication.

\begin{lemma}\label{lem:ryser2}
Let $n,r\in\mathbb{N}$ with $n\geq r\geq0$. Then there exists a binary
matrix $C\in\{0,1\}^{n\times n}$ whose rows and columns sum to $r$.
\end{lemma}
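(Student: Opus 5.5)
The plan is an explicit circulant construction. Index rows and columns by $\mathbb{Z}_n=\{0,1,\ldots,n-1\}$ and define
\[
  C_{jk}=\indic\{(k-j)\bmod n\in\{0,1,\ldots,r-1\}\},\qquad j,k\in\mathbb{Z}_n .
\]
In words, row $j$ has ones in columns $j,j+1,\ldots,j+r-1$ taken modulo $n$, so each row is a cyclic shift of the first.

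\textbf{Row sums.} Because $r\le n$, the residues $0,1,\ldots,r-1$ are pairwise distinct modulo $n$. The map $k\mapsto (k-j)\bmod n$ is a bijection of $\mathbb{Z}_n$, so row $j$ contains exactly $r$ ones.

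\textbf{Column sums.} For a fixed column $k$, the map $j\mapsto (k-j)\bmod n$ is also a bijection of $\mathbb{Z}_n$. Hence exactly $r$ rows $j$ satisfy $(k-j)\bmod n\in\{0,\ldots,r-1\}$, and column $k$ contains exactly $r$ ones.

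\textbf{Edge cases.} When $r=0$ the index set is empty, so $C$ is the zero matrix. When $r=n$ every entry equals one, so $C$ is the all-ones matrix. Both satisfy the statement.

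The only point that needs care is the assumption $r\le n$. It is exactly what rules out wrap-around collisions among the $r$ shifts, which would otherwise produce an entry that cannot be represented as a single $\{0,1\}$ value. Everything else is immediate.
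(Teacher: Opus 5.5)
Your circulant construction is correct and self-contained, and it differs from the paper's route. The paper constructs nothing here: its proof is a one-line appeal to Theorem~1.1 in Chapter~6 of \citet{ryser1963}, treating the lemma as a special case of Ryser's theory of $(0,1)$-matrices with prescribed row and column sums. That citation places the statement inside a far more general existence theory. However, it cannot be checked on the spot and yields no explicit matrix, whereas your bijection argument settles rows and columns in two lines.

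Your matrix is, up to transposition, exactly what the paper's own cyclic-shift recursion in Lemma~\ref{lem:array} produces when applied to the base vector with ones in its first $r$ entries. That recursion gives $\tilde c_{i_1,i_2}=c_{i_1-i_2+1}$, i.e.\ ones exactly where $(i_1-i_2)\bmod n\in\{0,\ldots,r-1\}$. Your idea also yields the $d$-dimensional lemma in one step, via $c_{i_1,\ldots,i_d}=\indic\{(i_1+\cdots+i_d)\bmod n\in\{0,\ldots,r-1\}\}$. Fixing all indices but one again leaves a bijection of $\mathbb{Z}_n$, so the external citation is dispensable.

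One small imprecision is in your closing remark. With $C$ defined as an indicator, taking $r>n$ would never create a non-binary entry. It would only make $\{0,\ldots,r-1\}$ cover all $n$ residues, so every line sum would be $n$ instead of $r$. Your ``collision'' picture is accurate only if you view $C$ as the sum of the $r$ shifted permutation matrices, which agrees with the indicator definition precisely when $r\le n$.
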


\begin{proof}[Proof of Lemma~\ref{lem:ryser2}]
Special case of Theorem~1.1, Chapter~6 of \citet{ryser1963}.\hfill
\end{proof}

For an $N$-player game, the following generalization is needed.

\begin{lemma}\label{lem:array}
Let $d,n, r \in \mathbb{N}$ with $n \geq r \geq 0$. Then there exists $\text{C} = (c_{i_1,...,i_d})_{1 \leq i_1,...,i_d \leq n} \in \{0,1\}^{n^d}$ such that
\[\sum_{j_k = 1}^n c_{i_1,...,i_{(k-1)},j_k,i_{(k+1)},...,i_d}=r \label{eq:property} \tag{P} \]
for any $1 \leq k \leq d$ and $1\leq i_1,...,i_{(k-1)},i_{(k+1)},...,i_d \leq n$. 
\end{lemma}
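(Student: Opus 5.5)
An explicit cyclic construction settles the statement, with no appeal to Ryser-type existence theorems. Identify the index set $\{1,\dots,n\}$ with $\mathbb{Z}_n=\{0,1,\dots,n-1\}$ and fix any subset $S\subseteq\mathbb{Z}_n$ with $|S|=r$, for instance $S=\{0,1,\dots,r-1\}$. The trivial cases $r=0$ (take $C\equiv 0$) and $r=n$ (take $C\equiv 1$) are consistent with the construction but need no separate treatment. Define
\[
  c_{i_1,\dots,i_d}=\indic\Bigl\{\,\textstyle\sum_{k=1}^d i_k \bmod n\in S\Bigr\}.
\]
This is the natural $d$-dimensional generalization of a circulant matrix. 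For $d=2$ it recovers a binary matrix with all line sums equal to $r$, giving an alternative proof of Lemma~\ref{lem:ryser2}.

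To verify property~\eqref{eq:property}, fix a direction $k$ and all indices $i_\ell$ with $\ell\neq k$. Set $s=\sum_{\ell\neq k} i_\ell \bmod n$. As $j_k$ ranges over $\mathbb{Z}_n$, the sum $s+j_k \bmod n$ also ranges over all of $\mathbb{Z}_n$, exactly once each, because translation is a bijection of $\mathbb{Z}_n$. Hence
\[
  \sum_{j_k} c_{i_1,\dots,j_k,\dots,i_d}=\bigl|\{j\in\mathbb{Z}_n: s+j\in S\}\bigr|=|S|=r,
\]
independently of $k$ and of the fixed indices. This is exactly~\eqref{eq:property}.

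I expect no real obstacle. The only care needed is bookkeeping: the paper indexes from $1$ to $n$, so I would shift indices by one (or equivalently reduce $\sum_k (i_k-1)$ modulo $n$), and I would note that the case $d=1$ is included, since then the single line sum is $|S|=r$. The proof is therefore a direct translation-invariance argument rather than an inductive one. An inductive route stacking Lemma~\ref{lem:ryser2} across dimensions is possible but unnecessary here.
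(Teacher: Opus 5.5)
Your proof is correct and is essentially the paper's construction written in closed form. The paper argues by induction: starting from the vector with $r$ leading ones, it lets the new index $i_{d+1}$ cyclically shift the first index of a $d$-dimensional array satisfying (P), and unrolling this recursion gives $c_{i_1,\dots,i_d}=1$ if and only if $i_1-i_2-\cdots-i_d+d-1$ is congruent modulo $n$ to an element of $\{1,\dots,r\}$. That is your array after the relabeling $i_k\mapsto -i_k$ for $k\ge 2$ and a translation of $S$, and both verifications rest on the same fact that translation is a bijection of $\mathbb{Z}_n$. Your closed form reduces the check to one line and avoids the inductive index bookkeeping, while the paper's inductive step has the minor extra generality of lifting any array with property (P) to one more dimension.
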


The proof is by induction and the idea is that from $C\in\{0,1\}^{n^d}$
that respects~(P), one can construct
$\tilde{C}\in\{0,1\}^{n^{d+1}}$ that also respects~(P). The construction
proceeds by shifting the entries of $C$ on a fixed position that depends on
the new index.

\begin{proof}[Proof of Lemma~\ref{lem:array}]
In what follows, additions and subtractions on the indices are $\text{mod-}n$: for some index $1 \leq i \leq n$, we let $i-1 = n$ if $i = 1$ and $i + 1 = 1$ if $i = n$. We proceed in three steps:
\begin{enumerate}
    \item Notice that with $d=1$ one can consider the binary vector whose first $r$ entries are one. 
    \item For $d \geq 2$, let $\text{C}^1 = (c^1_{i_1,...,i_{d}})_{1\leq i_1,...,i_{d}\leq n}$ respect \ref{eq:property} and, for each index $s \in \{2,...,n\}$, define $\text{C}^s$ as 
\[c^s_{i_1,...,i_d} = c^1_{(i_1-s+1),...,i_d}.\]
Then, for each $s$, $\text{C}^s$ respects property  \ref{eq:property} . Indeed, let $1 \leq k \leq d$ and $1 \leq i_1,..., i_{k-1},i_{k+1},...,i_d \leq n$, then
\[\sum_{j=1}^n c^s_{i_1,..., i_{(k-1)},j,i_{(k+1)},...,i_n} = \sum_{j=1}^n c^1_{(i_1-s+1),..., i_{(k-1)},j,i_{(k+1)},...,i_n} = r.\]
Thus, all elements in the collection $\{\text{C}^s \in \{0,1\}^{n^d}\}_{s=1,...,n}$ respect property \ref{eq:property}. 
\item  Given the collection $\{\text{C}^s \in \{0,1\}^{n^d}\}_{s=1,...,n}$ where each $\text{C}^s$ respects \ref{eq:property} and has generic element $c^s_{i_1,...,i_d}$, then $\tilde{\text{C}}= (\tilde{c}_{i_1,...,i_{(d+1)}})_{1 \leq i_1,...,i_{(d+1)} \leq n}$ defined as
\[\tilde{c}_{i_1,...,i_d,i_{(d+1)}} = c_{i_1,...,i_d}^{i_{(d+1)}}\]
respects property \ref{eq:property} with $d+1,n,r \in \mathbb{N}$.  
Indeed, for $1 \leq k \leq d+1$ and $1 \leq i_1,...,i_{(k-1)},i_{(k+1)},...,i_{(d+1)} \leq n$, we distinguish two cases:
\begin{enumerate}
    \item For $k = d+1$, we have
\begin{equation}
    \begin{split}
        \sum_{j=1}^n \tilde{c}_{i_1,...,i_d,j} & = \sum_{i_j=1}^n c^{i_j}_{i_1,...,i_d}
        \\
        & = \sum_{j=1}^n c_{i_1-j+1,i_2,...,i_d}\\
        & = \sum_{j'=1}^n c_{j',i_2,...,i_d} = r.
    \end{split}
\end{equation}
\item For $k \in \{1,...,d\}$, 
\begin{equation}
    \begin{split}
        \sum_{j=1}^n \tilde{c}_{i_1,...,j,...,i_{d+1}} & = \sum_{j=1}^n c^{i_{(d+1)}}_{i_1,...,j,...,i_d}
        \\
        & = \sum_{j=1}^n c_{i_1-i_{(d+1)}-1,,...,j,...,i_d} = r.
    \end{split}
\end{equation}
\end{enumerate}
\end{enumerate}
\end{proof}

We can now proceed with the proof of Lemma ~\ref{lem:hypercube}, that uses the result of Lemma ~\ref{lem:array}. 

\begin{proof}[Proof of Lemma~\ref{lem:hypercube}]
Since $p$ has rational components, there exist integers $k_a\geq0$ for any
$a\in A$ and a common integer denominator $k$ such that $p_a=k_a/k$. We
construct each ingredient needed for the implementation of a target
$\mu\in\Delta A$ step by step.

\textbf{1. Message set.} Assign to each player $i$ a finite message set
$M_i$ of dimension $k|A_i|$; the total dimension of the message set is
thus $|M|=k^N|A|$. The message set $M$ is built as a collection of binary
hypercubes: to each action profile $(a_1,\ldots,a_N)\in A$, associate a
$k^N$ hypercube $M^a=\times_i M_i^{a_i}$ of messages, where $|M_i^{a_i}|=k$.

\textbf{2. Strategy profile.} Define the strategy profile of player $i$ as
\[
  \sigma_i(m_i)=a_i,\quad m_i\in M_i^{a_i}.\tag{S}
\]

\textbf{3. Feedback structure.} To build the feedback structure, for each
$M^a$ consider an auxiliary binary hypercube $C^a$ of dimension $k^N$ such
that the sum over each dimension is $k_a$. By Lemma~\ref{lem:array}, such
hypercubes exist and can be built iteratively given $d=|N|$, $n=k$, and
$r=k_a$. The elements of $M^a$ and $C^a$ are related by a bijection.
Generally, let $c(m)$ be the element of the binary hypercube corresponding
to action profile $a=\sigma(m)$ and message $m$. The feedback structure is
built as follows:
\[
  \indic_{\{m\}}\in\Fcal\iff c(m)=0.
\]

\textbf{4. Data-generating process (i).} Construct a DGP $\eta\in\Delta M$
that assigns probability zero to messages corresponding to the zero elements
of the auxiliary binary hypercube:
\[
  \eta(m:c(m)=0)=0.\tag{DGP}
\]

\textbf{5. Belief formation.} Given these constraints, the belief derived
from maximum entropy is uniform over its support:
\[
  q_m=\begin{cases}
    \frac{1}{\sum_a k^{N-1}k_a} & \text{if }c(m)=1,\\
    0 & \text{if }c(m)=0.
  \end{cases}
\]
Furthermore, by the construction of each binary hypercube $C^a$, the
belief assigned to each block $M^a$ is:
\[
  q(M^a)=\frac{k^{N-1}k_a}{\sum_a k^{N-1}k_a}=\frac{k_a}{k}=p_a.
\]

\textbf{6. Incentive compatibility.} Given the strategy profiles as in~(S),
one has
\[
  q(\sigma_{-i}(m_{-i})\mid m_i)
  =\frac{k_a}{k}\cdot\frac{k}{\sum_{a:a_i\in a}k_a}
  =\frac{p_a}{p_{a_i}}=p(a_{-i}\mid a_i).
\]
Therefore, $\sigma_i$ as in~(S) is incentive-compatible since $p\in\Delta A$
is a correlated equilibrium.

\textbf{7. Data-generating process (ii) and implementation.} Notice that
all data-generating processes $\eta\in\Delta M$ that respect condition~(DGP)
are informationally equivalent and induce the same incentive-compatible
belief. For a target outcome $\mu\in\Delta A$ such that
$\supp(\mu)\subseteq\supp(p)$, we have
\[
  \mu_a>0\implies p_a>0\implies k_a>0\implies\exists \: m\in M:c(m)=1.
\]
For such $m$, choose $\eta_m=\mu_a$. This concludes the proof.
\end{proof}

The above steps lead to Proposition~\ref{prop:main}, that we report below.

\begin{proposition}\label{prop:3}[Point 1 of Proposition \ref{prop:main} in the main body of the text]
Let $\mu\in\Delta A$ be a jointly coherent outcome. Then for any
$\epsilon>0$ there exists a partially specified data-generating process that
$\epsilon$-implements it.
\end{proposition}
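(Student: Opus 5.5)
The plan is to reduce to Lemma~\ref{lem:hypercube} by approximating a maximal-support correlated equilibrium with a rational distribution. Since $\CE(\Gamma)$ is a convex polytope, it contains a correlated equilibrium $p^*$ with $\supp(p^*)=\supp\CE(\Gamma)$: average finitely many correlated equilibria whose supports jointly cover $\supp\CE(\Gamma)$. By joint coherence, $\supp(\mu)\subseteq\supp(p^*)$.

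Next, choose a rational $\tilde p\in\Delta A$ with $\supp(\tilde p)=\supp(p^*)$ and $\|\tilde p-p^*\|_\infty\le\delta$. This is possible because rationals are dense in the relative interior of the face $\Delta(\supp p^*)$, and $p^*$ lies in that relative interior. The distribution $\tilde p$ need not be a correlated equilibrium. However, for each $i$, $a_i$ and $b\in A_i$, the obedience expression
\[
  \sum_{a_{-i}}\tilde p_{a_i,a_{-i}}\bigl[u_i(a_i,a_{-i})-u_i(b,a_{-i})\bigr]
\]
differs from the same expression at $p^*$ by at most $2\lvert A\rvert\max\lvert u\rvert\,\delta$. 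Since the $p^*$ expression is nonnegative, the $\tilde p$ expression is at least $-\epsilon$ once $\delta$ is chosen small enough. So $\tilde p$ is a rational $\epsilon$-correlated equilibrium in the unconditional sense used in the definition of $\epsilon$-implementation.

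Then rerun the construction in the proof of Lemma~\ref{lem:hypercube} with $p=\tilde p$. Steps 1--5 (message set, strategies (S), feedback structure from the binary hypercubes of Lemma~\ref{lem:array}, a DGP respecting (DGP), and the uniform maximum-entropy belief) use only the rationality of $p$, not its equilibrium property. They yield a belief $q$ with $q(M^a)=\tilde p_a$.

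Step 6 is where the argument must be redone. For $m_i\in M_i^{a_i}$, the sum $\sum_{m_{-i}}q_{m_i,m_{-i}}[\cdots]$ must be expressed in terms of $\tilde p$. By the hypercube property (P), each slice of $C^a$ obtained by fixing $m_i$ carries the same mass, $k^{N-2}k_a$ ones, normalized. The expression is therefore a positive multiple of the $\tilde p$ obedience expression, with factor $1/k$. Hence it is at least $-\epsilon/k\ge-\epsilon$. Step 7 then places $\eta_m=\mu_a$ on some $m\in M^a$ with $c(m)=1$. This is possible because $\supp\mu\subseteq\supp\tilde p$, and it gives $\eta\circ\sigma=\mu$.

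I expect the main obstacle to be this per-message bookkeeping. The statement of Lemma~\ref{lem:hypercube} cannot simply be invoked, since it assumes an exact correlated equilibrium. One must verify that the slice-uniformity of the hypercubes makes the per-message incentive constraint proportional to the aggregate $\tilde p$ constraint, with a constant that is controlled and does not depend on $\delta$. That proportionality is what lets the error transfer.
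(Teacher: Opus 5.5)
Your proposal is correct and follows essentially the paper's route. Both approximate a maximal-support correlated equilibrium by a rational distribution with the same support, use continuity of the obedience constraints, and rerun the hypercube construction of Lemma~\ref{lem:hypercube}. You also make explicit a step the paper leaves implicit, namely that slice-uniformity makes each per-message constraint equal to $1/k$ times the $\tilde p$ obedience expression.

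One small correction to your closing remark: the factor $1/k$ does depend on $\delta$, since the common denominator $k$ typically grows as $\tilde p$ approaches $p^*$. This is harmless, because your argument only needs $1/k\le 1$.
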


\begin{proof}[Proof of Proposition ~\ref{prop:3}]

For each pair $(i,a_i)$, define the functional
$\psi_{i,a_i}:\Delta M\to\mathbb{R}$ as
\[
  \psi_{i,a_i}(q)=\sum_{m\in M}q_m\bigl[u_i(\sigma_i(m_i),\sigma_{-i}(m_{-i}))
  -u_i(a_i,\sigma_{-i}(m_{-i}))\bigr].
\]
Let $q\in\Delta M$ be an arbitrary message distribution that, together with
a strategy profile $\{\sigma_i\}_{i\in N}$, generates a maximal-support
correlated equilibrium. For such $q$ and $\{\sigma_i\}_{i\in N}$, it holds
$\psi_{i,a_i}(q)\geq0$ for all $i\in N$, $a_i\in A_i$.

Let $B=\max_{i,a_i,m}|u_i(\sigma_i(m_i),\sigma_{-i}(m_{-i}))-u_i(a_i,\sigma_{-i}(m_{-i}))|$
and notice that
\[
  |\psi_{i,a_i}(q')-\psi_{i,a_i}(q)|\leq B\sum_{m\in M}|q'_m-q_m|
  \quad\forall i,a_i.
\]
For any $\epsilon>0$, let $\delta=\epsilon/B$. Since rationals are dense in
the reals, there exists $q'\in\Delta M\cap\mathbb{Q}^{|M|}$ such that
$\sum_m|q'_m-q_m|<\delta$ and $\supp(q')=\supp(q)$. For such $q'$ and for
any $i$, $a_i$:
\[
  \psi_{i,a_i}(q')\geq\psi_{i,a_i}(q)-B\sum_m|q'_m-q_m|
  \geq\psi_{i,a_i}(q)-\epsilon\geq-\epsilon.
\]
Since it is without loss of generality to restrict to pure strategies,
$p'=q'\circ\sigma\in\Delta A$ has rational components and the same support
as $p\in\Delta A$. This concludes the proof.
\end{proof}

To conclude, we recall some known facts about the geometry of the set of
correlated equilibria \citep{nau2004}.

\begin{remark}
The set of correlated equilibria of a game $\Gamma$ is a convex polytope.
The support of such set is thus the union of the support of its extreme
points:
\[
  \supp\CE(\Gamma)=\bigcup_{q\in\mathrm{Ext}\,\CE(\Gamma)}\supp(q).
\]
\end{remark}

\begin{remark}
If the payoff matrix has rational entries, the extreme points of the set of
correlated equilibria have rational coordinates.
\end{remark}

\begin{remark}
Consider the correlated strategy obtained by averaging the extreme points of
the set of correlated equilibria:
\[
  \bar{q}=\frac{\sum_{q\in\mathrm{Ext}\,\CE(\Gamma)}q}{|\mathrm{Ext}\,\CE(\Gamma)|}.
\]
Then $\supp(\bar{q})=\supp\CE(\Gamma)$ and $\bar{q}$ has rational
components. Then by Lemma~\ref{lem:hypercube}, it is possible to implement
any outcome $\mu$ with $\supp(\mu)\subseteq\supp(\bar{q})$.
\end{remark}

The arguments listed above lead to the following result.

\begin{proposition}[Point 2 of Proposition \ref{prop:main} in the main body of the text] \label{prop:4} 
Let $\Gamma=(N,A,u)$ have rational payoffs. Then the following are
equivalent:
\begin{enumerate}
  \item $\mu$ is implementable;
  \item $\supp(\mu)\subseteq\supp\CE(\Gamma)$.
\end{enumerate}
\end{proposition}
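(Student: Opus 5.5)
The equivalence splits into two implications, and both reduce to results already established.

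For (1)$\Rightarrow$(2) I would invoke Lemma~\ref{lem:supp} directly. If $\mu$ is implemented by some $\Dcal=(M,\eta,\Fcal)$ with strategies $\sigma$, the maximum-entropy belief $q$ satisfies $\supp(\eta)\subseteq\supp(q)$. This holds because the entropy is steep at the boundary: any message that is not forced to zero by the linear constraints receives positive mass. The obedience condition makes $q\circ\sigma$ a correlated equilibrium, and so $\supp(\mu)=\supp(\eta\circ\sigma)\subseteq\supp(q\circ\sigma)\subseteq\supp\CE(\Gamma)$. This direction uses no rationality assumption.

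For (2)$\Rightarrow$(1), the plan is to build a single rational correlated equilibrium of maximal support and then apply Lemma~\ref{lem:hypercube}.

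\begin{enumerate}
\item The correlated-equilibrium constraints are finitely many linear inequalities, together with the simplex equality. When payoffs are rational, all their coefficients are rational, so $\CE(\Gamma)$ is a polytope defined over $\mathbb{Q}$.
\item Each extreme point of this polytope is the unique solution of a subsystem of active constraints with rational coefficients. By Cramer's rule, every extreme point therefore has rational coordinates.
\item Let $\bar q$ be the uniform average of the finitely many extreme points. It is rational, and it lies in $\CE(\Gamma)$ by convexity.
\item Since every correlated equilibrium is a convex combination of extreme points, its support is contained in the union of the supports of the extreme points, and that union is exactly $\supp(\bar q)$. Hence $\supp(\bar q)=\supp\CE(\Gamma)$.
\item Applying Lemma~\ref{lem:hypercube} with $p=\bar q$ yields a finite $\Dcal$ that implements every $\mu$ with $\supp(\mu)\subseteq\supp(\bar q)=\supp\CE(\Gamma)$.
\end{enumerate}

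The main obstacle is step 2, the rationality of the extreme points. The argument has to note that the polytope is nonempty and bounded, since it sits inside the simplex and correlated equilibria always exist. It also has to note that a vertex is pinned down by $|A|$ linearly independent active constraints drawn from a rational system. Everything else is a direct invocation of earlier results. I would also check that the construction in Lemma~\ref{lem:hypercube} tolerates $k_a=0$ blocks, which is where the support restriction enters. Those blocks are entirely excluded by the indicator constraints, so this should cause no difficulty.
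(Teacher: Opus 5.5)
Your proposal is correct and follows the paper's argument. Necessity comes from Lemma~\ref{lem:supp}, and sufficiency comes from averaging the rational extreme points of $\CE(\Gamma)$ into a rational $\bar q$ with $\supp(\bar q)=\supp\CE(\Gamma)$, then applying Lemma~\ref{lem:hypercube}; your Cramer's-rule justification for the rationality of the vertices supplies a detail the paper states only as a remark.
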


\begin{remark}\label{rem:polytope}
Proposition~\ref{prop:4} immediately implies that the set of implementable
outcomes is a convex polytope. Indeed, a convex polytope is the convex hull
of a finite set of points. Let $C=\{\delta_a:a\in\supp\CE(\Gamma)\}$. Then
$\supp(\mu)\subseteq\supp\CE(\Gamma)\iff\mu\in\mathrm{conv}(C)$.
\end{remark}

\subsection{Proof on Direct Implementation}\label{app:direct}

\begin{proof}[Proof of Lemma~\ref{lem:induce}]
($\Leftarrow$) Suppose there exists a finite set $\Fcal$ such that $q$ is a
solution of~\eqref{eq:B}. First, since the objective function is steep at
the boundary of the simplex, one has $\supp(\eta)\subseteq\supp(q)$. Then,
by the optimality conditions, one has
\[
  \nabla\mathcal{L}(q)=0\implies\log(q_m)+1-\sum_k\lambda_k f^k_m=0
  \quad\forall q_m>0,
\]
\[
  \implies\sum_m(q_m-\eta_m)\log(q_m)=0,
\]
where $q_m=0$ only if forced by some constraints and $\eta_m=0$. The last
equality comes from multiplying by $(q_m-\eta_m)$, summing over all $m\in M$,
and noticing that $\sum_m f^k_m(q_m-\eta_m)=0$ for each $f^k\in\Fcal$.

($\Rightarrow$) Suppose $\sum_m\log(q_m)(q_m-\eta_m)=0$ and
$\supp(\eta)\subseteq\supp(q)$. First, if $q$ has full support, consider a
unique random variable $f:M\to\mathbb{R}$ defined as $f_m=\log q_m+1$ for
all $m\in M$. The KKT system is given by
\[
  \log q'_m+1=\lambda f_m\quad\forall m\in M,
  \qquad\sum_m\log(q_m)(q'_m-\eta_m)=0,
\]
and it is solved for $q'=q$ and $\lambda=1$. If $q$ does not have full
support, we can extend the feedback structure with $\indic_m$ for all $m$
such that $q_m=0$.
\end{proof}

\begin{proof}[Proof of Proposition~\ref{prop:direct}]
Outcome $\mu\in\Delta A$ is directly implementable if a common belief
$q\in\Delta A$ that respects the correlated-equilibrium condition can be
induced from $\mu$. By Lemma~\ref{lem:induce}, a belief $q\in\Delta A$ can
be induced from some $\mu\in\Delta A$ if and only if $\supp(\mu)\subseteq\supp(q)$
and $\E_q[\log q]=\E_\mu[\log q]$. Hence, the set of outcomes directly
implementable given belief $q\in\Delta A$ is
\[
  \Ical_q=\Bigl\{\mu'\in\Delta(A):\supp(\mu')\subseteq\supp(q),\;
  \E_{\mu'}[\log q]=\E_q[\log q]\Bigr\}.
\]
Since it must be $q\in\CE(\Gamma)$, the set of outcomes directly
implementable is $\Ical=\bigcup_{q\in\CE(\Gamma)}\Ical_q$.
\end{proof}
\end{appendix}

\end{document}